\def\Anonymise{0}
\newtheorem{theorem}{Theorem}[section]
\newtheorem{claim}[theorem]{Claim}
\theoremstyle{definition}
\newtheorem{problem}{Problem}
\theoremstyle{remark}
\renewcommand*\backref[1]{\ifx#1\relax \else (cit. on p. #1) \fi}
\algnewcommand{\IfThen}[2]%
{\State \algorithmicif\ #1\ \algorithmicthen\ #2}
\def\moverlay{\mathpalette\mov@rlay}
\def\mov@rlay#1#2{\leavevmode\vtop{%
	\baselineskip\z@skip \lineskiplimit-\maxdimen
	\ialign{\hfil$\m@th#1##$\hfil\cr#2\crcr}}}
\newcommand{\charfusion}[3][\mathord]{
	#1{\ifx#1\mathop\vphantom{#2}\fi
		\mathpalette\mov@rlay{#2\cr#3}
	}
	\ifx#1\mathop\expandafter\displaylimits\fi}
\def\poly{\mathrm{poly}}
\def\polylog{\mathrm{polylog}}
\def\Pans{{P_{\sf ans}}}
\def\eps{\epsilon}
\def\rank{\mathrm{rank}}
\def\HitSet{\textsc{HitSet}}
\def\ellmax{\ell_{\sf max}}
\def\Helper{\textsc{Helper}}
\def\caD{\mathcal{D}}
\def\Dstar{\caD_{\star}}
\newcommand{\Patrascu}{P\v{a}tra\c{s}cu\xspace}
\newenvironment{mybox}
{\begin{mdframed}[hidealllines=true,backgroundcolor=gray!10,innertopmargin=0]}
{\end{mdframed}}
\definecolor{color1}{RGB}{46,134,193}
\begin{document}
	
	\title{Maintaining Exact Distances under Multiple Edge Failures}
	
	\ifnum\Anonymise=1
	\author{Anonymous author(s)}
	\else
	\author{Ran Duan \\ \small{Tsinghua University} \\ \texttt{\small \href{mailto:duanran@mail.tsinghua.edu.cn}{duanran@mail.tsinghua.edu.cn}}
	\and
	Hanlin Ren\thanks{Most of this work was done when Hanlin Ren was affiliated with Tsinghua University.} \\ \small{University of Oxford} \\ \texttt{\small \href{mailto:hanlin.ren@cs.ox.ac.uk}{hanlin.ren@cs.ox.ac.uk}}}
	\fi

	\pagenumbering{gobble}
	
	\maketitle
	\begin{abstract}
	We present the first compact distance oracle that tolerates multiple failures and maintains \emph{exact} distances. Given an undirected weighted graph $G = (V, E)$ and an arbitrarily large constant $d$, we construct an oracle that given vertices $u, v \in V$ and a set of $d$ edge failures $D$, outputs the \emph{exact} distance between $u$ and $v$ in $G - D$ (that is, $G$ with edges in $D$ removed). Our oracle has space complexity $O(d n^4)$ and query time $d^{O(d)}$. Previously, there were compact \emph{approximate} distance oracles under multiple failures [Chechik, Cohen, Fiat, and Kaplan, SODA'17; Duan, Gu, and Ren, SODA'21], but the best exact distance oracles under $d$ failures require essentially $\Omega(n^d)$ space [Duan and Pettie, SODA'09]. Our distance oracle seems to require $n^{\Omega(d)}$ time to preprocess; we leave it as an open question to improve this preprocessing time.
\end{abstract}

	\tableofcontents
	\newpage
	\pagenumbering{arabic}

\section{Introduction}
\label{sec:intro}

Real-life networks are dynamic. Sometimes a link or node suffers from a crash failure and has to be removed from the network. Occasionally a new link or node is added to the network. This motivates the study of \emph{dynamic} graph algorithms: algorithms that receive a stream of updates to the graph and needs to simultaneously respond to queries about the \emph{current} graph. The field of dynamic graph algorithms is both classical and vibrant that we will not be able to survey here.

However, in many situations, the network is not ``too'' dynamic in the sense that the graph will always remain close to a ``base'' graph. Thus, by preprocessing this base graph, one might obtain better query time bounds than what is possible in the fully dynamic setting. One example is the \emph{$d$-failure} model: in each query, there is a (small) set of failures (which are either vertices or edges), and we are interested in the graph with the failures removed. After this query was done, the failures are repaired and do not influence the next query.

In this paper, we consider the problem of maintaining distances in the $d$-failure model. More precisely, given a graph $G = (V, E)$, we want to build an oracle that answers the following queries quickly: given a set of edge failures $D\subseteq E$ with size at most $d$ and two vertices $u, v \in V$, what is the distance from $u$ to $v$ in $G-D$ (i.e., the graph with edges in $D$ removed)?

\subsection{Previous Works}

The case of $d = 1$ (i.e., only one failure) is well-understood. There is an oracle of size $O(n^2\log n)$ and query time $O(1)$ that maintains exact distances in a directed graph under one edge or vertex failure \cite{DTCR08}. A long line of work \cite{BK08, BK09, WY13, GW20, DZ17, ChechikC20, Ren22, GuR21} has focused on optimising the space or preprocessing time of this oracle.

The case of $d = 2$ was considered by Duan and Pettie \cite{DP09}: they presented an oracle of size $O(n^2\log^3 n)$ and query time $O(\log n)$ that maintains exact distances in directed graphs under two vertex or edge failures. Unfortunately, their techniques do not seem to generalise to even $3$ failures. They even concluded that ``moving beyond dual-failures will require a fundamentally different approach to the problem.''

The problem becomes significantly harder when $d$ becomes large. In fact, previous works in this regime had to weaken the query requirements: instead of exact distance queries, they could only handle connectivity queries or \emph{approximate} distance queries.
\begin{enumerate}[(1)]
	\item \Patrascu and Thorup \cite{PT07} presented an oracle for handling connectivity queries under $d$ edge failures in an undirected graph. Their oracle has size $\tilde{O}(m)$\footnote{$\tilde{O}$ hides $\polylog(n)$ factors.} and query time $\tilde{O}(d)$. Duan and Pettie \cite{DP10, DP20} presented oracles that answer connectivity queries under $d$ \emph{vertex} failures\footnote{Edge failures are always no harder than vertex failures, as we can always insert a vertex in the middle of every edge to simulate an edge failure by a vertex failure. However, in many cases \cite{DP10, DuanGR21}, dealing with vertex failures requires significantly new ideas compared to edge failures.}, which has size $\tilde{O}(m)$ and query time $\tilde{O}(d^2)$.
	\item Chechik, Langberg, Peleg, and Roditty \cite{CLPR12} designed $O(kd)$-approximate distance oracles under $d$ edge failures in an undirected graph, which has size $O(dkn^{1+1/k}\log (nW))$ and query time $\tilde{O}(d)$; here $k$ is an arbitrary constant and $W$ is an upper bound on the edge weights. Bil\`o, Gual\`a, Leucci, and Proietti \cite{BGLP16} improved the approximation ratio to $(2d+1)$, with the expense of a larger query time of $\tilde{O}(d^2)$.
	\item If we are willing to tolerate a space complexity bound exponential in $d$, then we could even achieve $(1+\eps)$-approximation for every $\eps > 0$. Chechik, Cohen, Fiat, and Kaplan \cite{CCFK17} designed $(1+\eps)$-approximate distance oracles under $d$ edge failures in undirected graphs, which has size $O(n^2(\log n / \eps)^d\cdot d\log W)$ and query time $O(d^5\log n\log\log W)$. Duan, Gu, and Ren \cite{DuanGR21} generalised this oracle to also handle vertex failures; their oracle has size $n^{2.01}\cdot (\log n/\eps)^{O(d)}\cdot \log W$ and query time $\poly(d, \log n, \log \log W)$.
	\item Brand and Saranurak \cite{BS19} designed reachability and exact distance oracles for \emph{directed} graphs. Their reachability oracle has size $\tilde{O}(n^2)$ and query time $O(d^\omega)$, while their distance oracle has size $\tilde{O}(Wn^{2+\alpha})$ and query time $O(Wn^{2-\alpha}d^2 + Wnd^\omega)$; here $\omega < 2.3728596$ is the matrix multiplication exponent \cite{CW90, Sto10, Wil12, LeGall, AlmanW21} and $\alpha \in [0, 1]$ is an arbitrary parameter.\label{item:previous-work-d}
\end{enumerate}

\paragraph{Exact distances?} Despite significant efforts on the $d$-failure model, our understanding about the setting where \emph{exact} distances need to be maintained is still quite primitive. One reason is that the structure of shortest paths after $d$ failures appears to be very complicated: \cite{DP09} used heavy case analysis to deal with \emph{two} failures, and three failures appear to be even harder! (We also think this complexity provides further motivation for studying exact distance oracles in the $d$-failure model, as a good oracle enhances our understanding of the structure of $d$-failure shortest paths.)

All oracles in the above list could only answer connectivity or approximate distance queries. The only exception is the distance oracle in (\ref{item:previous-work-d}), but its query time is polynomial in $n$ and $W$. Ideally, we want an oracle with query time $\poly(\log n, \log W)$. Thus the following question is open:

\begin{mybox}
\begin{problem}
	Fix a large constant $d$. Is there a $d$-failure oracle for handling exact distance queries in undirected graphs with query time $\poly(\log n, \log W)$ and a reasonable size bound?\label{open1}
\end{problem}
\end{mybox}

In fact, before our work, the best $d$-failure exact distance oracle requires size $\tilde{\Omega}(n^d)$ \cite{DP09}, only slightly better than the trivial $O(n^{d+2})$ bound.\footnote{For simplicity, this particular paragraph only considers vertex failures. The na\"ive bound for edge failures is $O(m^{d-2}n^2)$ which is even worse than the na\"ive bound for vertex failures.} The trivial bound is obtained as follows: for every set of failures $D$ with size at most $d$, we store the all-pairs shortest path matrix for the graph $G - D$, which requires $\binom{n}{d} \cdot O(n^2) \le O(n^{d+2})$ space complexity. Duan and Pettie \cite{DP09} observed that their dual-failure oracle helps shave a factor of about $n^2$ from this trivial bound: for every set of failures $D$ with size at most $d-2$, we build a dual-failure exact distance oracle for $G-D$ which occupies size $\tilde{O}(n^2)$, and the total space complexity becomes $\binom{n}{d-2}\cdot \tilde{O}(n^2) \le \tilde{O}(n^d)$. However, even the answer to the following problem was unknown:

\begin{mybox}
	\begin{problem}
		Is there a $100$-failure exact distance oracle for undirected graphs with query time $\poly(\log n, \log W)$ and size $O(n^{99.9})$?\label{open2}
	\end{problem}
\end{mybox}

\subsection{Our Results}
Our main result is an exact distance oracle under $d$ edge failures, for every constant $d\ge 1$.

\begin{mybox}
	\begin{theorem}
		Let $G = (V, E)$ be an undirected weighted graph. For every constant $d \ge 1$, there is an oracle that handles exact distance queries in $G$ under $d$ edge failures. The oracle has size $O(n^4)$ and query time $O(1)$.\label{thm: main}
	\end{theorem}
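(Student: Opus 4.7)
The plan is to attack the theorem in three stages: a structural decomposition of shortest replacement paths, a bounded-depth branching query procedure, and a shared ``Helper'' data structure of total size $O(n^4)$ that the query procedure consults in $O(1)$ per lookup.

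\textbf{Structural decomposition.} First I would prove a structural lemma: the shortest $s$-to-$t$ path $P$ in $G-D$, with $|D|\le d$, decomposes into $k = O(d)$ maximal subpaths $P = \pi_1\cdot\pi_2\cdots\pi_k$, each $\pi_i$ being a shortest path in $G$ between its endpoints. The argument is a standard exchange: any maximal sub-segment of $P$ that is not already a $G$-shortest path can be charged to an edge of $D$, which bounds the number of segment boundaries by $O(d)$. The boundary vertices, the \emph{pivots}, are combinatorially constrained---each is close to an endpoint of some failed edge---so the set of candidate next pivots at any point in the query has size at most $\poly(d)$.

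\textbf{Query as a branching search.} Second, the query becomes a recursive procedure of depth $O(d)$: given a partial pivot sequence ending at $p$, enumerate $\poly(d)$ candidate next pivots $p'$ and, for each, look up in $O(1)$ from the Helper (indexed by a quadruple of the form $(s,p,p',t)$) the length of the best continuation; then take the minimum and recurse on $(p', t, D)$. With per-step branching $\poly(d)$ and depth $O(d)$, the total query time is $d^{O(d)}$, matching the abstract's bound.

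\textbf{The $O(n^4)$ Helper, and the main obstacle.} The heart of the proof is realising the Helper in $O(n^4)$ space. I would store, for each quadruple $(s,x,y,t)$ of vertices, a constant-size sketch encoding ``how to continue a partial $s$-to-$t$ shortest path from $x$ toward $y$ when certain edges are forbidden,'' organised (as hinted by the preamble macros \texttt{Trunk}, \texttt{Leaf}, \texttt{DecTree}, \texttt{HitSet}, and \texttt{Helper}) via a shortest-path tree decomposition together with a hitting-set argument that bounds the number of ``interesting'' quadruples and keeps their per-entry information constant. The main obstacle is justifying that $O(1)$ bits per quadruple are truly enough: naively, encoding the optimal continuation for every remaining failure configuration costs $\Omega(n^{d-2})$ bits per quadruple, and avoiding this blow-up requires a peeling invariant guaranteeing that each recursion step consumes exactly one segment of the decomposition, so that the sketch commits only to a single next pivot rather than to all future choices. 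Making this invariant consistent across the $O(d)$ recursion levels---so that the Helper trained for ``four-vertex'' questions can still drive a $d$-fault search---is, I expect, the key technical innovation of the paper.
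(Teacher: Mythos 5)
Your outer skeleton (decompose $\pi_{G-D}(u,v)$ into at most $d+1$ shortest paths of $G$, recurse with depth $O(d)$ and $\poly(d)$ branching, back everything by a table indexed by quadruples of vertices) does match the paper's framework, but there is a genuine gap at the exact point where the difficulty lives. Your justification for $\poly(d)$ branching is the claim that each segment boundary (``pivot'') of the decomposition is ``close to an endpoint of some failed edge.'' That is not true: in \cref{thm:edge-decomposable} the interleaved edges are edges of $G-D$, not of $D$, and the breakpoints where the replacement path switches from one $G$-shortest path to another can be arbitrary vertices far from $V(D)$ (already for a single failure, the detour's endpoints need not touch the failed edge). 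If the pivots were always near $V(D)$, a candidate set of size $2d$ would be immediate and the problem would be easy; since they are not, your branching search has no small candidate set as stated. Compounding this, the content of your $O(1)$-per-quadruple ``sketch'' is never specified, and you explicitly flag that you do not know how to avoid the $n^{\Omega(d)}$ blow-up per entry --- but that unspecified step \emph{is} the theorem.

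The paper's actual mechanism, which is absent from your proposal, is to store for each quadruple $(u,v,u',v')$ (plus two flag bits) the \emph{adversarial} size-$d$ failure set $\caD[u,v,u',v',b_1,b_2]$ maximizing $|\pi_{G-D'}(u,v)|$ subject to $\pi(u,u')$ and $\pi(v',v)$ (and, per the flags, the subtrees $T_u(u')$, $T_v(v')$) being intact from $D'$. Whenever the real failure set $D$ satisfies the same constraints, either the stored worst-case distance already equals $|\pi_{G-D}(u,v)|$, or the replacement path must cross an edge of the stored set $\Dstar$ --- and the at most $2d$ endpoints of $\Dstar$ are the candidate pivots, which is where the $\poly(d)$ branching really comes from (space $O(dn^4)$, i.e.\ $O(n^4)$ for constant $d$). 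Guaranteeing the recursion depth decreases requires in addition that the chosen pivot avoid the first and last segments; the paper enforces this through the notion of a $u$-clean / $v$-clean helper vertex and an LCA-based argument on the subtree of $T_u$ induced by $V(D)$ (the trees $\Tinduced$, $\Tkey$ in \cref{sec:hitset}), which bounds the number of ``push-down'' rounds by $d$. Without a replacement for this hitting-set-from-precomputed-worst-case idea and the cleanness bookkeeping, your proposal does not yield the $\poly(d)$ candidate sets or the $O(1)$-size table entries, and hence does not establish \cref{thm: main}.
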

\end{mybox}

Our oracle is the first one that maintains \emph{exact} distances under multiple (say $100$) failures while having reasonable size and query time bounds ($O(n^4)$ and $O(1)$ respectively). In particular, we answer Problems~\ref{open1} and \ref{open2} affirmatively. Unfortunately, we do not know how to preprocess our oracle in less than $n^{\Omega(d)}$ time. We leave it as an open problem to improve the preprocessing time of our oracle.

Our oracle also extends to super-constant $d$. In this case, our oracle has query time $d^{O(d)}$ and size $O(dn^4)$. We note that an exponential dependence on $d$ also occurs in the best data structures for maintaining $(1+\eps)$-approximate distances \cite{CCFK17, DuanGR21} (albeit in the space complexity bounds instead of the query time bounds).

\subsection{Notation}
Let $D\subseteq E$, we use $G-D$ to denote the graph $G$ with edges in $D$ removed. For a graph $H$ and two vertices $u, v\in V(H)$, $\pi_H(u, v)$ denotes the shortest path from $u$ to $v$ in $H$, and $|\pi_H(u, v)|$ denotes the length of this shortest path. When $H = G$ is the input graph, we may omit the subscript $H$, i.e., $\pi(u, v) = \pi_G(u, v)$. Although this paper only considers undirected graphs, the paths will be directed, i.e., $\pi_H(u, v)$ is a path \emph{from $u$ to $v$} and $u$ (or $v$) is the first (or last) vertex on it.

We assume that the shortest paths in $G$ are unique. If not, we could randomly perturb the edge weights of $G$ by a small value; the correctness follows from the isolation lemma \cite{MulmuleyVV87, Ta-Shma15}. Alternatively, we could use a method described in \cite[Section 3.4]{DI04} to obtain unique shortest paths. We omit the details here.

We will also consider shortest path trees (in the input graph $G$). For vertices $v, v' \in V$, $T_v$ denotes the shortest path tree rooted at $v$, and $T_v(v')$ denotes the subtree of $T_v$ rooted at $v'$ (which contains all vertices $w$ such that $v'$ is on the path $\pi(v, w)$).

\section{An Overview of Our Oracle}\label{sec: overview}
In this section, we present a high-level overview of our exact distance oracle.

\paragraph{A recursive approach.} Our starting point is the following structural theorem for the shortest paths in a graph with $d$ edge failures \cite[Theorem 2]{ABKCM02}. (See also \cref{thm:edge-decomposable}.)

\begin{theorem}\label{thm: decomposable (informal version)}
	Let $G$ be a graph, $D$ be a set of $d$ edge failures. Any shortest path in $G-D$ can be decomposed into the concatenation of at most $d+1$ shortest paths in $G$ interleaved with at most $d$ edges.
\end{theorem}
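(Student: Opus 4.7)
The plan is to proceed by induction on $d$. The base case $d = 0$ is immediate: $G - D = G$, so any shortest path in $G - D$ is itself a shortest path in $G$, giving a trivial decomposition into one shortest path with no interleaved edges.

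For the inductive step, let $P = \pi_{G-D}(u, v)$ and let $x$ be the vertex on $P$ that is farthest from $u$ such that the prefix $P[u \to x]$ is a shortest path in $G$. If $x = v$ we are done. Otherwise, let $y$ be the vertex immediately after $x$ on $P$; then the edge $(x, y)$ lies on $P$, so $(x, y) \notin D$, and we get the decomposition $P = P[u \to x] \cdot (x, y) \cdot P[y \to v]$. The strategy is to find an edge $e^* \in D$ such that $P[y \to v]$ remains a shortest path in $G - (D \setminus \{e^*\})$. Applying the inductive hypothesis with the $(d-1)$-failure set $D \setminus \{e^*\}$ then decomposes $P[y \to v]$ into at most $d$ shortest paths in $G$ interleaved with at most $d - 1$ edges; prepending $P[u \to x]$ and $(x, y)$ yields the claimed decomposition of $P$.

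To find $e^*$, I would examine the path $\pi_G(u, y)$. By the maximality of $x$ we have $|\pi_G(u, y)| < |P[u \to y]|$, and moreover $\pi_G(u, y)$ must use at least one edge of $D$: otherwise the walk $\pi_G(u, y) \cdot P[y \to v]$ would lie in $G - D$ and be strictly shorter than $P$, contradicting the optimality of $P$. I would then take $e^*$ to be an appropriately chosen edge of $D$ on $\pi_G(u, y)$ — either the first or the last such edge along $\pi_G(u, y)$, so that a significant portion of $\pi_G(u, y)$ lies entirely in $G - D$.

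The main obstacle is proving that $P[y \to v]$ is indeed shortest in $G - (D \setminus \{e^*\})$. I would argue by contradiction: suppose there is a path $R$ from $y$ to $v$ in $G - (D \setminus \{e^*\})$ with $|R| < |P[y \to v]|$. Then $R$ must use $e^*$. Splicing a $D$-free subpath of $\pi_G(u, y)$ with an appropriate portion of $R$ (and possibly a segment of $P$) should produce a walk in $G - D$ between appropriate endpoints that is strictly shorter than the corresponding subpath of $P$, contradicting optimality. A case analysis on the direction in which $R$ traverses $e^*$ is required, and the choice of $e^*$ (first vs.\ last $D$-edge of $\pi_G(u, y)$) must be aligned to make the splicing valid for both directions. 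Reconciling the two directions of traversal into a single coherent choice of $e^*$ — and verifying the length bound using the strict inequalities $|\pi_G(u, y)| < |P[u \to y]|$ and $|R| < |P[y \to v]|$ together with the triangle inequality — is the technical heart of the argument.
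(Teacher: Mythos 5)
You should first note that the paper does not actually prove this statement: it is imported verbatim from \cite{ABKCM02} (Theorem 2 there, restated as \cref{thm:edge-decomposable}), so there is no in-paper proof to compare against. Your proposal, however, has a genuine gap: the claim that drives your induction --- that some $e^*\in D$ exists for which $P[y\to v]$ remains a shortest path in $G-(D\setminus\{e^*\})$ --- is false in general, so the ``technical heart'' you defer cannot be carried out. Counterexample with $d=2$: let $V=\{u,x,y,v,a,b\}$ with edges $(u,x)$ of weight $2$, $(x,y)$ of weight $2$, $(y,v)$ of weight $20$, $(u,a)$ of weight $1$, $(a,b)$ of weight $1$, $(b,y)$ of weight $1$, $(a,v)$ of weight $7$, and let $D=\{(u,a),(a,b)\}$. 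In $G-D$ the only $u$--$v$ path is $P=u\to x\to y\to v$, so $P=\pi_{G-D}(u,v)$. One checks $\pi_G(u,x)=(u,x)$ while $\pi_G(u,y)=u\to a\to b\to y$ has length $3<4=|P[u\to y]|$, so your $x$ and $y$ are exactly these two vertices, and $P[y\to v]$ is the single edge $(y,v)$ of length $20$. Taking $e^*=(u,a)$ (i.e.\ keeping $(a,b)$ failed but restoring $(u,a)$... rather, removing only $(a,b)$) leaves the $y$--$v$ path $y\to x\to u\to a\to v$ of length $12$; taking $e^*=(a,b)$ leaves $y\to b\to a\to v$ of length $9$. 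Both are shorter than $20$ and avoid the one remaining failure, so $P[y\to v]$ is not a shortest path in $G-(D\setminus\{e^*\})$ for either choice; in particular neither the ``first'' nor the ``last'' $D$-edge of $\pi_G(u,y)$ works.

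The failure is structural rather than a matter of choosing $e^*$ more cleverly: in this example $(y,v)$ is not a shortest path in $G-D'$ for \emph{any} $D'\subseteq D$ with $|D'|\le 1$, since the only single edge whose removal makes $(y,v)$ shortest is $(a,v)\notin D$. So the inductive hypothesis, as you have set it up, simply cannot be applied to the suffix. The theorem itself is unharmed --- $(y,v)$ is trivially $1$-decomposable as (trivial path)$\cdot$(edge)$\cdot$(trivial path) --- which shows that a correct induction must bound the decomposability of the suffix directly (or via a shortest path in a smaller failure set not constrained to lie inside $D$), rather than by restoring one failed edge. If you intend to prove the theorem rather than cite \cite{ABKCM02}, you need a different mechanism for charging the at most $d$ break edges of the greedy decomposition to the $d$ failures; the argument you sketch only goes through in the single-failure case.
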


Given a query $(u, v, D)$, suppose we want to find $\Pans = \pi_{G-D}(u, v)$. From \cref{thm: decomposable (informal version)}, $\Pans$ is divided into $d+1$ segments where each segment is a shortest path in $G$. Our strategy is to find an arbitrary vertex $w \in \Pans$ \emph{which is neither on the first nor on the last segment}, recursively find $\pi_{G-D}(u, w)$ and $\pi_{G-D}(w, v)$, and concatenate these two paths. If $\Pans$ can be decomposed into $k$ segments and $w$ is neither on the first nor on the last segment, then both $\pi_{G-D}(u, w)$ and $\pi_{G-D}(w, v)$ can be decomposed into at most $k-1$ segments. It follows that the recursion depth is at most $d+1$.

We do not know how to find a single vertex $w$, but we are able to find a ``hitting set'' consisting of $\poly(d)$ many vertices $w$ such that at least one $w$ sits on $\Pans$ and is neither on the first nor on the last segment of $\Pans$. Therefore, the query time is $\poly(d)^d = d^{O(d)}$.

\paragraph{Finding a hitting set.} Now, it remains to design a procedure for finding such a hitting set. It suffices to find a set $H\subseteq V$ such that:
\begin{enumerate}[(I)]
	\item there is a vertex $w\in H$ that lies on $\Pans$;\label{item: overview -> hitting set}
	\item $|H|$ should be small; and\label{item: overview -> small}
	\item every vertex $w\in H$ that lies on $\Pans$ does not lie on the first or the last segment of $\Pans$.\label{item: overview -> the hitting set hits the middle of Pans}
\end{enumerate}

As a warm-up, we present a (very simple!) hitting set satisfying (\ref{item: overview -> hitting set}) and (\ref{item: overview -> small}). Of course it is (\ref{item: overview -> the hitting set hits the middle of Pans}) that enables us to upper bound the recursion depth; we will address (\ref{item: overview -> the hitting set hits the middle of Pans}) later.

For every $u, v\in V$, we simply let $\caD[u, v]$ be the set of $d$ edge failures \emph{whose removal maximises the distance from $u$ to $v$}. Note that $\caD[u, v]$ does not depend on $D$ and therefore can be preprocessed in advance. Since $|\caD[u, v]|\le d$, (\ref{item: overview -> small}) is true. Now we claim that either $\caD[u, v]$ hits $\Pans$, or $\pi_{G-\caD[u, v]}(u, v)$ and $\Pans$ are exactly the same path. Indeed, if $\caD[u, v]$ does not hit $\Pans$, then $\Pans$ is no shorter than the path $\pi_{G-\caD[u, v]}(u, v)$, which means $D$ should be \emph{as good as} the best candidate for $\caD[u, v]$! And $\pi_{G-\caD[u, v]}(u, v)$ and $\Pans$ should coincide in this case.

We regard the latter case (i.e.~$\pi_{G-\caD[u, v]}(u, v) = \Pans$) as ``trivial'' since it can be handled in preprocessing. Therefore, in this informal overview, we will ignore this case and simply say that $\caD[u, v]$ is a hitting set of $\Pans$.

Unfortunately, we are not aware of any fast algorithm for computing (any reasonable approximation of) $\caD[u, v]$, therefore we are currently unable to obtain a fast preprocessing algorithm for our data structure.

\paragraph{Achieving (\ref{item: overview -> the hitting set hits the middle of Pans}) via cleanness.} Suppose we have found a hitting set $H$ satisfying (\ref{item: overview -> hitting set}) and (\ref{item: overview -> small}) but not (\ref{item: overview -> the hitting set hits the middle of Pans}). Without loss of generality, suppose that there is a vertex $w\in H$ on the first segment of $\Pans$. This implies that $\pi(u, w)$ is intact from failures (as it lies entirely inside the first segment of $\Pans$). Our first insight is that if $w$ is \emph{``$u$-clean''}, then we could find another hitting set satisfying (\ref{item: overview -> hitting set}) and (\ref{item: overview -> small}) and \emph{avoiding the first segment of $\Pans$}.

The definition of cleanness is as follows: We say $w$ is \emph{$u$-clean} if both $\pi(u, w)$ and $T_u(w)$ are intact from failures.\footnote{To be more precise, ``$T_u(w)$ is intact from failures'' means that \emph{every vertex in $T_u(w)$ is not incident to any failed edge}.} (Recall that $T_u$ is the shortest path tree rooted at $u$, and $T_u(w)$ is the subtree of $T_u$ with root $w$.) Now suppose $w$ is $u$-clean and $\Pans$ goes through $w$. Let $\Dstar$ be the maximiser of $|\pi_{G-D'}(u, v)|$ such that 
\begin{equation}
\textit{both $\pi(u, w)$ and $T_u(w)$ are intact from $D'$.}
\label{eq: overview -> condition on caD}
\tag{$\sigma$}
\end{equation}
Again, $\Dstar$ depends on $u, v, w$ but not on $D$, so it can be preprocessed in advance. As $w$ is $u$-clean, $D$ also satisfies (\ref{eq: overview -> condition on caD}). Therefore $\Dstar$ hits $\Pans$ by the above reasoning.

Now consider a vertex $w'$ incident to some edge in $\Dstar$ and suppose that $\Pans$ also goes through $w'$. If $\pi(u, w')$ is intact from $D$, then either $\pi(u, w')$ does not go through $w$ (in this case $\Pans$ does not go through $w$ either) or $w'$ is in $T_u(w)$ (violating (\ref{eq: overview -> condition on caD})), a contradiction. Therefore we only need to consider vertices $w' \in \Dstar$ such that $\pi(u, w')$ is not intact from $D$; such vertices $w'$ can never hit the first segment of $\Pans$.

One can similarly define the notion of $v$-cleanness (which we omit here). The above discussion generalises to the following statement: If we know two vertices $u'$ and $v'$ such that $u'$ is $u$-clean, $v'$ is $v$-clean, and $\Pans$ goes through both $u'$ and $v'$, then we can find a hitting set satisfying all three conditions above. Note that we need to store a set $\Dstar$ for each possible $(u, v, u', v')$, therefore our oracle requires $O(dn^4)$ space complexity.

\paragraph{Finding a clean vertex.} Now, the problem reduces to finding $u$-clean and $v$-clean vertices that hit $\Pans$. For simplicity, in this overview, we consider the scenario where we already know a $v$-clean vertex $v'$ that is on $\Pans$, and want to find a $u$-clean vertex $u'$ that also hits $\Pans$. We believe this scenario already captures our core technical ideas.

Consider the following na\"ive attempt. Suppose we have a vertex $u'$ but $T_u(u')$ contains some failures. Our goal is to ``push'' $u'$ to a deeper vertex in $T_u$ so that eventually $T_u(u')$ will be intact from $D$. Let $\Dstar$ be the maximiser of $|\pi_{G-D'}(u, v)|$ such that
\begin{equation}
\textit{all of $\pi(v', v)$, $T_v(v')$, and $\pi(u, u')$ are intact from $D'$.}
\label{eq: overview -> condition of caD II naive}
\tag{$\tau_{\textsf{na\"ive}}$}
\end{equation}

Consider any vertex $w$ incident to some edge in $\Dstar$. It turns out that if $w$ is not a \emph{strict} descendant of $u'$ (i.e.~$w\not\in T_u(u')$ or $w = u'$) then we can deal with $w$ easily. If $w$ is a strict descendant of $u'$, then we assign $u'\gets w$ (i.e.~push $u'$ down to $w$) and repeat. We have made some progress as we pushed $u'$ to $w$ and $T_u(w)$ is a subset $T_u(u')$. But after how many steps would $T_u(w)$ become intact from $D$? It seems that we might need $\Omega(n)$ steps before we push $u'$ down to some vertex $w$ which is $u$-clean.

\def\uLCA{u_{\sf LCA}}
Our second insight is the following modification to (\ref{eq: overview -> condition of caD II naive}). Let $\uLCA$ be the least common ancestor of all failures in $T_u(u')$. If $D$ is intact from $\pi(u, u')$, then \emph{$D$ should also be intact from $\pi(u, \uLCA)$}. Now, let $\Dstar$ be the maximiser of $|\pi_{G-D'}(u, v)|$ such that

\begin{equation}
\textit{all of $\pi(v', v)$, $T_v(v')$, and $\pi(u, \underline{\uLCA})$ are intact from $D'$.}
\label{eq: overview -> condition of caD II}
\tag{$\tau$}
\end{equation}

Note that $D$ still satisfies (\ref{eq: overview -> condition of caD II}) which means $\Dstar$ is still a valid hitting set. We can push $u'$ down to some vertex $w$ which is a strict descendant of $\uLCA$. Now comes the crucial observation: the number of failures in $T_u(w)$ is \emph{strictly smaller} than the number of failures in $T_u(u')$. Since there are at most $d$ failures in $T_u(u')$, it takes at most $d$ steps of ``pushing $u'$ down'' before $u'$ becomes $u$-clean (i.e.~$T_u(u')$ becomes intact from $D$).

We remark that in the formal proof in \cref{sec: case II} there is no implementation of ``pushing $u'$ down''. Instead, we enumerate the portion of $T_u$ where the last step of pushing happens (i.e., $u'$ ``becomes'' $u$-clean). Nevertheless, the two formulations are equivalent, and we find the above description more intuitive.

	\section{An Exact Distance Oracle for Edge Failures}\label{sec:exact-oracle}
In this section, we describe the framework for our exact distance oracle that handles $d$ edge failures and prove \cref{thm: main}.

As mentioned in \cref{sec: overview}, we use the structural theorem of shortest paths under edge failures in \cite{ABKCM02}. We say that a path is a \emph{$k$-decomposable path} if it is the concatenation of at most $k+1$ shortest paths in $G$, interleaved with at most $k$ edges. We have:
\begin{theorem}[Theorem 2 of \cite{ABKCM02}]
	For any set $D$ of $d$ edge failures in the graph and any vertices $u,v\in V$, the path $\pi_{G-D}(u, v)$ is a $d$-decomposable path.
	\label{thm:edge-decomposable}
\end{theorem}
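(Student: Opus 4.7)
I would prove the theorem by induction on $d = |D|$. The base case $d = 0$ is immediate, since then $\pi_{G-\emptyset}(u,v) = \pi_G(u,v)$ is a single shortest path in $G$ and hence $0$-decomposable.

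For the inductive step, write $P := \pi_{G-D}(u,v)$; if $P = \pi_G(u,v)$ we are done. Otherwise let $w$ be the furthest vertex along $P$ from $u$ such that the prefix $P[u \to w]$ coincides with $\pi_G(u,w)$. By subpath optimality $w$ is well-defined, and by assumption $w \neq v$; let $w'$ be the vertex immediately after $w$ on $P$, so $(w,w')$ is an edge of $G-D$. By the maximality of $w$, the concatenation $\pi_G(u,w) \cdot (w,w') = P[u \to w']$ is \emph{not} the shortest $(u,w')$-path in $G$, hence $|\pi_G(u,w')| < |P[u \to w']|$. Since $P[u \to w']$ is itself the shortest $(u,w')$-path in $G-D$ (as a subpath of $P$), the path $\pi_G(u,w')$ cannot lie entirely in $G-D$ and must use some failed edge $e \in D$.

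The heart of the induction is to show $\pi_{G-(D \setminus \{e\})}(w',v) = P[w' \to v]$. Granting this, applying the inductive hypothesis with the failure set $D \setminus \{e\}$ (of size $d-1$) gives that $P[w' \to v]$ is $(d-1)$-decomposable, and prepending the $G$-shortest path $\pi_G(u,w)$ together with the single edge $(w,w')$ yields the desired $d$-decomposition of $P$. To establish the equality, suppose for contradiction there is a strictly shorter $(w',v)$-path $Q$ in $G-(D \setminus \{e\})$; then $Q$ must traverse $e$, for otherwise $Q$ would be a $(w',v)$-path in $G-D$ strictly beating $P[w' \to v]$, contradicting the subpath optimality of $P$. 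The walk $W := \pi_G(u,w') \cdot Q$ is thus a $(u,v)$-walk in $G-(D \setminus \{e\})$ of length strictly less than $|P|$ that uses the failed edge $e$ exactly twice, and I would short-circuit $W$ at the common endpoints of these two traversals to produce a walk in $G-D$ of length at most $|W| < |P|$, contradicting that $P$ is the shortest $(u,v)$-path in $G-D$.

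The main obstacle is to make this short-circuit step robust. When the two traversals of $e = (a,b)$ in $W$ use opposite orientations, one can cut out both copies of $e$ at $a$ (or equivalently at $b$) immediately; but when both traversals have the same orientation, a naive short-circuit still leaves one copy of $e$ behind. I would handle the same-orientation case by choosing $e$ more carefully --- for instance, as the \emph{last} failed edge along $\pi_G(u, w')$, so that the suffix of $\pi_G(u,w')$ after $e$ is failure-free --- and combining the structure of that suffix with the two halves of $Q$ around $e$ to re-route the walk past $e$ in $G-D$. An alternative plan that sidesteps the direction issue altogether is to decompose $P$ greedily in a single pass into shortest-path segments separated by transition edges, and then charge each transition to a distinct failure via a Hall-type matching: transition $i$ is connected to every failed edge lying on its $G$-detour $\pi_G(s_{i-1}, w'_i)$, and the bound $k \le d$ on the number of transitions then follows from verifying Hall's condition on this bipartite graph --- which is essentially the same combinatorial obstacle in a different guise.
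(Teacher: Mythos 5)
First, note that the paper does not prove this statement at all: it is imported as Theorem~2 of \cite{ABKCM02} and used as a black box, so there is no in-paper argument to compare yours against; your proposal has to stand on its own.

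Your setup is fine through the identification of $w$ and $w'$ and the conclusion that $\pi_G(u,w')$ must contain a failed edge (you should invoke the paper's standing assumption that shortest paths in $G$ are unique to get the strict inequality $|\pi_G(u,w')| < |P[u\to w']|$). The genuine gap is the key lemma $\pi_{G-(D\setminus\{e\})}(w',v) = P[w'\to v]$, and it sits exactly where you flag it, but your proposed repairs do not close it. Excising the two traversals of $e=(a,b)$ from $W = \pi_G(u,w')\cdot Q$ must produce a walk avoiding \emph{all} of $D$, and the piece of $\pi_G(u,w')$ that survives the excision is its prefix up to $e$; for that piece to be failure-free you need $e$ to be the \emph{first} failed edge on $\pi_G(u,w')$, the opposite of the choice (``last'') you propose. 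Conversely, the same-orientation case can indeed be handled by rerouting through the suffix of $\pi_G(u,w')$ after $e$ (concatenate $P[u\to w']$, the reversal of that suffix, and the tail of $Q$ after $e$; the inequality $|\pi_G(b,w')|\le |Q[w'\to a]| + w(e)$ gives the length bound), but \emph{that} rerouting needs the suffix to be clean, i.e.\ it needs $e$ to be the \emph{last} failed edge. When $\pi_G(u,w')$ carries two or more failures these two requirements collide: no single choice of $e$ is shown to defeat both possible orientations of the hypothetical shorter path $Q$, and it is not even clear that the lemma holds for an arbitrary $e$ --- you must exhibit one that works. Your fallback via Hall's condition on the transition--failure bipartite graph is, as you concede, the same unproved statement in different clothing. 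As a sanity check, your argument does go through verbatim when $|D\cap\pi_G(u,w')| = 1$ (first coincides with last), which is why the single-failure case is easy; the entire difficulty of the theorem is concentrated in the step you have left open.
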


For a set $D$ of $d$ edge failures and vertices $u,v\in V$, we define $\rank_{G-D}(u, v)$ as the smallest number $i$ such that $\pi_{G-D}(u, v)$ is an $i$-decomposable path. For example, $\rank_{G-D}(u, v) = 0$ if and only if the shortest $u$-$v$ path contains no failures. Thus, the conclusion of \cref{thm:edge-decomposable} can be interpreted as ``$\rank_{G-D}(u, v) \le |D|$.''

Our query algorithm relies on a subroutine $\HitSet(u, v, D)$, whose details will be given in \cref{sec:hitset}. Given $u,v\in V$ and a set of $d$ edge failures $D$, the output of $\HitSet(u, v, D)$ consists of an upper bound $L$ of $|\pi_{G-D}(u, v)|$ and a set of vertices $H\subseteq V$, such that the following holds. \begin{enumerate}[(a)]
	\item Either $|\pi_{G-D}(u, v)| = L$, or $\pi_{G-D}(u, v)$ goes through some vertex in $H$.\label{item1: HitSet hits the path}
	\item $|H| \le O(d^6)$.
	\item For every vertex $w \in H$, both $\pi(u, w)$ and $\pi(w, v)$ contain failures in $D$.\label{item1: HitSet doesn't hit the first or last segment}
\end{enumerate}

Note that Item (\ref{item1: HitSet doesn't hit the first or last segment}) ensures the following property:
\begin{claim}\label{claim: recursion on w}
	Let $u, v\in V$, $D$ be a set of $d$ edge failures, and $r = \rank_{G-D}(u, v)$. Suppose $w$ is a vertex in $H$ that is also on $\pi_{G-D}(u, v)$. Then $\rank_{G-D}(u, w) \le r-1$ and $\rank_{G-D}(w, v) \le r-1$.
\end{claim}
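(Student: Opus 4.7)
The plan is to directly exploit the minimum decomposition guaranteed by the rank of $\Pans := \pi_{G-D}(u, v)$, together with property~(\ref{item1: HitSet doesn't hit the first or last segment}) of $\HitSet$.

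First, I would fix an optimal decomposition $\Pans = Q_0 \cdot e_1 \cdot Q_1 \cdot e_2 \cdots e_r \cdot Q_r$ witnessing that $\Pans$ is $r$-decomposable, where each $Q_j$ is a (possibly trivial) shortest path in $G$ and each $e_i$ is an edge linking consecutive segments. By construction, every vertex of $\Pans$ belongs to at least one segment $Q_j$, so in particular $w$ lies on some $Q_j$.

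The crux is to argue $1 \le j \le r - 1$. Suppose for contradiction $j = 0$. Then the prefix of $\Pans$ from $u$ to $w$ is a subpath of $Q_0$; since $Q_0$ is a shortest path in $G$ and shortest paths in $G$ are assumed unique (see the notation section), this prefix equals $\pi(u, w)$. But $\Pans$ uses no edge of $D$, so neither does $\pi(u, w)$, contradicting property~(\ref{item1: HitSet doesn't hit the first or last segment}) which requires $\pi(u, w)$ to contain a failure. A symmetric argument using $\pi(w, v)$ rules out $j = r$.

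Given $1 \le j \le r - 1$, the prefix of $\Pans$ ending at $w$ is itself a shortest path in $G - D$ from $u$ to $w$ (since subpaths of shortest paths are shortest paths), hence equals $\pi_{G-D}(u, w)$. It naturally decomposes as $Q_0 \cdot e_1 \cdots e_j \cdot Q_j'$, where $Q_j'$ is the prefix of $Q_j$ ending at $w$ and, being a subpath of a shortest path in $G$, is itself a shortest path in $G$. This exhibits $\pi_{G-D}(u, w)$ as a $j$-decomposable path, so $\rank_{G-D}(u, w) \le j \le r - 1$. The bound $\rank_{G-D}(w, v) \le r - j \le r - 1$ follows by the symmetric decomposition of the suffix. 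I do not foresee a serious obstacle: the only subtlety is invoking uniqueness of shortest paths to identify the prefix of $Q_0$ ending at $w$ with $\pi(u, w)$, which is precisely the assumption already built into the paper's notation.
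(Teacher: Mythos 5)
Your proposal is correct and follows essentially the same route as the paper's proof: decompose $\pi_{G-D}(u,v)$ per its rank, use property~(\ref{item1: HitSet doesn't hit the first or last segment}) to rule out $w$ lying on the first or last segment, and count segments in the prefix and suffix. You merely spell out more explicitly the appeal to uniqueness of shortest paths when identifying the prefix of $Q_0$ with $\pi(u,w)$, which the paper leaves implicit.
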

\begin{figure}[H]
	\centering
	\includegraphics[width=0.9\linewidth]{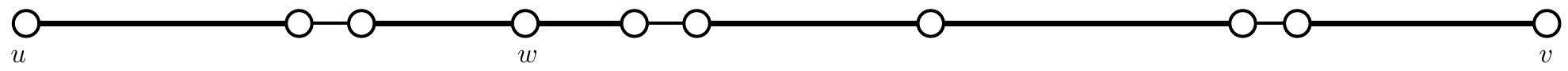}
	\caption{An example of \cref{claim: recursion on w}. Here $\rank_{G-D}(u, v) = 4$ and $\pi_{G-D}(u, v)$ is decomposed into $5$ shortest paths in $G$ (depicted as bold segments) interleaved with $3$ ($\le \rank_{G-D}(u, v)$) edges.}
\end{figure}
\begin{proof}
	We decompose $\pi_{G-D}(u, v)$ into $r+1$ shortest paths interleaved with at most $r$ edges. It is easy to see that $w$ is neither on the first shortest path nor on the last shortest path, as otherwise either $\pi(u, w)$ or $\pi(w, v)$ is intact from $D$, contradicting our assumption that $w \in H$.
	
	Since $w$ is not on the last shortest path, the portion of the decomposition from $u$ to $w$ includes at most $r$ shortest paths interleaved with at most $r-1$ edges, therefore $\rank_{G-D}(u, w) \le r-1$. Similarly, since $w$ is not on the first shortest path, $\rank_{G-D}(w, v) \le r-1$.
\end{proof}

We illustrate the query algorithm in \cref{alg:exact-query}. Roughly speaking, the idea is to enumerate a hitting vertex $w \in H$ and recursively find $\pi_{G-D}(u, w)$ and $\pi_{G-D}(w, v)$.

\begin{algorithm}[H]
	\caption{Query Algorithm for the Exact Distance Oracle}
	\label{alg:exact-query}
	\begin{algorithmic}[1]
		\Function{Query-r}{$u,v,D,r$}\Comment{We assume that $\rank_{G-D}(u, v) \le r$.}
		\IfThen {$\pi(u,v)\cap D=\varnothing$}{\Return $|\pi(u,v)|$}
		\IfThen {$r=0$}{\Return $+\infty$}
		\State {$(L, H)\gets\Call{HitSet}{u,v,D}$}
		\State {$ans\gets L$}
		\For {each $w$ in $H$}
			\State {$ans\gets \min\{ans, \Call{Query-r}{u,w,D,r-1} + \Call{Query-r}{w,v,D,r-1}\}$}
		\EndFor
		\State\Return $ans$
		\EndFunction
		\Function{Query}{$u,v,D$}
		\State\Return {$\Call{Query-r}{u,v,D,|D|}$}
		\EndFunction
	\end{algorithmic}
\end{algorithm}

In \cref{sec:hitset}, we will show that an invocation of $\HitSet(u, v, D)$ takes $\poly(d)$ time. Therefore, our query algorithm runs in $O(d^6)^d\cdot \poly(d) \le d^{O(d)}$ time. The following theorem demonstrates the correctness of our query algorithm.

\begin{theorem}
	Assuming the correctness of the $\HitSet$ structure, the query algorithm is correct.
\end{theorem}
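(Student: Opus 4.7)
The plan is to prove correctness by induction on the recursion parameter $r$, maintaining two invariants about the return value of $\Call{Query-r}{u, v, D, r}$: (I)~it is always an upper bound on $|\pi_{G-D}(u, v)|$ (possibly $+\infty$), and (II)~whenever $\rank_{G-D}(u, v) \le r$, it equals $|\pi_{G-D}(u, v)|$ exactly. Correctness of the top-level routine $\Call{Query}{u, v, D}$ then follows from \cref{thm:edge-decomposable}, which guarantees $\rank_{G-D}(u, v) \le |D|$, so invoking \Call{Query-r}{} with $r = |D|$ satisfies the hypothesis of~(II).

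For invariant~(I), the base cases are immediate: returning $|\pi(u, v)|$ when $\pi(u, v)\cap D = \varnothing$ gives the exact distance, and returning $+\infty$ is trivially an upper bound. In the recursive case, $ans$ is initialised to $L$, which is an upper bound by property~(a) of $\HitSet$ (either $L$ equals the true distance, or there is a hitting vertex on $\pi_{G-D}(u, v)$, and in either event $L \ge |\pi_{G-D}(u, v)|$ because a hit only occurs when $L$ is a valid distance estimate from the structure). For each $w \in H$, the recursive calls return upper bounds on $|\pi_{G-D}(u, w)|$ and $|\pi_{G-D}(w, v)|$ by the inductive hypothesis on~(I), so by the triangle inequality their sum is $\ge |\pi_{G-D}(u, v)|$. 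Taking the minimum preserves the upper-bound property.

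For invariant~(II), assume $\rank_{G-D}(u, v) \le r$. If $\rank_{G-D}(u, v) = 0$, then $\pi(u, v)$ is intact from $D$ and the first check returns the exact distance. Otherwise $\rank_{G-D}(u, v) \ge 1$, so we have $r \ge 1$ and the base case $r = 0$ is not reached. Now invoke $\HitSet(u, v, D)$: by property~(a), either $L = |\pi_{G-D}(u, v)|$, in which case combining $ans \le L$ with invariant~(I) finishes the argument, or there exists $w \in H$ lying on $\pi_{G-D}(u, v)$. In the latter case, \cref{claim: recursion on w} (which is where property~(c)/Item~(\ref{item1: HitSet doesn't hit the first or last segment}) enters) gives $\rank_{G-D}(u, w) \le r-1$ and $\rank_{G-D}(w, v) \le r-1$, so the inductive hypothesis on~(II) yields exact values for both recursive calls. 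Since $w$ lies on $\pi_{G-D}(u, v)$, their sum equals $|\pi_{G-D}(u, v)|$, so $ans$ is driven down to this value, and invariant~(I) prevents it from going lower.

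The argument is largely a straightforward structural induction; no step is a serious obstacle, but the one requiring care is making sure the two invariants are set up so that the \emph{inductive hypothesis} can be applied to both sides of the concatenation simultaneously. The key quantitative input is \cref{claim: recursion on w}, which is exactly what guarantees the rank drops by at least one on each side of $w$; without property~(\ref{item1: HitSet doesn't hit the first or last segment}) of $\HitSet$ the induction would fail to terminate within the allotted depth.
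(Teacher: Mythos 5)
Your proposal is correct and follows essentially the same route as the paper: establish that \textsc{Query-r} always returns an upper bound (since any returned value corresponds to an actual $u$--$v$ path in $G-D$, and $L$ is an upper bound by the $\HitSet$ specification), then induct on $r$ using property~(a) of $\HitSet$ and \cref{claim: recursion on w} to get the matching lower bound. The only cosmetic difference is that you package the two directions as explicit named invariants, whereas the paper dispatches the upper-bound direction in one sentence before the induction.
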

\begin{proof}
	We show that for every $u, v, D, r$, if $\rank_{G-D}(u, v) \le r$, then $\textsc{Query-r}(u, v, D, r) = |\pi_{G-D}(u, v)|$. The theorem follows from \cref{thm:edge-decomposable}.
	
	Actually, it is easy to see that $\textsc{Query-r}(u, v, D, r) \ge |\pi_{G-D}(u, v)|$ as we could always construct a $u$-$v$ path in $G-D$ with length $\textsc{Query-r}(u, v, D, r)$. Therefore it suffices to show that $\textsc{Query-r}(u, v, D, r) \le |\pi_{G-D}(u, v)|$.
	
	We use induction on $r$. Our assertion is clearly true for $r = 0$. Let $r \ge 1$ and $(L, H) = \HitSet(u, v, D)$. If $|\pi_{G-D}(u, v)| = L$ then we are done, as $\textsc{Query-r}(u, v, D, r) \le L$ in this case. Otherwise by Item~(\ref{item1: HitSet hits the path}) in the correctness of $\HitSet$, $\pi_{G-D}(u, v)$ hits some vertex $w \in H$. By \cref{claim: recursion on w}, $\rank_{G-D}(u, w) \le r-1$ and $\rank_{G-D}(w, v) \le r-1$. By induction, we have that
	\[\textsc{Query-r}(u, w, D, r - 1) = |\pi_{G-D}(u, w)|\text{ and }\textsc{Query-r}(w, v, D, r - 1) = |\pi_{G-D}(w, v)|.\]
	It follows that
	\[\textsc{Query-r}(u, v, D, r) \le |\pi_{G-D}(u, w)| + |\pi_{G-D}(w, v)| = |\pi_{G-D}(u, v)|.\qedhere\]
\end{proof}

\section{The $\HitSet$ Structure}\label{sec:hitset}

In this section, we describe the implementation of $\HitSet$. Recall that given $u, v\in V$ and a set of $d$ edge failures $D$, it should output an upper bound $L$ of $|\pi_{G-D}(u, v)|$ and a set of vertices $H\subseteq V$, such that the following items hold.
\begin{enumerate}[(a)]
	\item Either $|\pi_{G-D}(u, v)| = L$, or $\pi_{G-D}(u, v)$ goes through some vertex in $H$.\label{item: HitSet hits the path}
	\item $|H| \le O(d^6)$. \label{item: HitSet is small}
	\item For every vertex $w \in H$, both $\pi(u, w)$ and $\pi(w, v)$ contain failures in $D$.\label{item: HitSet doesn't hit the first or last segment}
\end{enumerate}

\paragraph{Warm-up.} Suppose that we drop Item (\ref{item: HitSet doesn't hit the first or last segment}) above, then it is easy to present a data structure for $\HitSet$ with space complexity $O(dn^2)$. For every two vertices $u, v\in V$, let $\caD[u, v]$ be the set of $d$ edge failures that maximises $|\pi_{G - \caD[u, v]}(u, v)|$. The data structure simply stores $\caD[u, v]$ and $\ellmax(u, v) = |\pi_{G - \caD[u, v]}(u, v)|$. In each query $\HitSet(u, v, D)$, for every edge $e \in \caD[u, v]$, we arbitrarily pick an endpoint of $e$ and add it into $H$. Then we return $L = \ellmax(u, v)$ and $H$. Item (\ref{item: HitSet is small}) holds since $|\caD[u, v]| \le d$. Therefore it suffices to prove that Item (\ref{item: HitSet hits the path}) holds:

\begin{claim}\label{claim: warm-up}
	For every set of $d$ edge failures $D$, either $|\pi_{G - D}(u, v)| = \ellmax(u, v)$, or $\pi_{G - D}(u, v)$ goes through some vertex in $H$.
\end{claim}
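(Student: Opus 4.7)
The plan is to argue by contraposition: assume $\pi_{G-D}(u,v)$ avoids every vertex in $H$, and then conclude that $|\pi_{G-D}(u,v)| = \ellmax(u,v)$. This essentially reduces to a ``$D$ is at least as good as the optimal candidate'' argument along the lines sketched in the overview.

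First, I would use the crucial property that $H$ contains \emph{one endpoint of each edge in $\caD[u,v]$}. If $\pi_{G-D}(u,v)$ avoids all vertices in $H$, then for every $e \in \caD[u,v]$, the path misses at least one endpoint of $e$, and hence does not traverse $e$. So $\pi_{G-D}(u,v)$ is a valid $u$–$v$ path in the subgraph $G - (D \cup \caD[u,v])$, which is in particular a subgraph of $G - \caD[u,v]$. Since $\pi_{G-\caD[u,v]}(u,v)$ is the shortest $u$-$v$ path in $G - \caD[u,v]$, this yields the inequality
\[
\ellmax(u,v) \;=\; |\pi_{G-\caD[u,v]}(u,v)| \;\le\; |\pi_{G-D}(u,v)|.
\]

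In the other direction, the definition of $\caD[u,v]$ as a maximiser of $|\pi_{G-D'}(u,v)|$ over all $d$-edge failure sets $D'$ immediately gives $|\pi_{G-D}(u,v)| \le \ellmax(u,v)$. Combining the two inequalities yields $|\pi_{G-D}(u,v)| = \ellmax(u,v)$, which is exactly the first alternative in the claim. The contrapositive then delivers Item~(\ref{item: HitSet hits the path}).

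There is no real obstacle here; the only subtle point is being careful that ``$H$ intersects $\pi_{G-D}(u,v)$'' is a strictly weaker statement than ``$\pi_{G-D}(u,v)$ uses some edge of $\caD[u,v]$'', so the argument must go in the correct direction: we use avoidance of $H$ to derive non-use of $\caD[u,v]$, not the other way around. One small additional remark is that the argument is valid even when $\pi_{G-D}(u,v)$ is undefined (distance $+\infty$): in that case $D$ disconnects $u$ from $v$ and the inequality $\ellmax(u,v) \le |\pi_{G-D}(u,v)| = +\infty$ is vacuous, but then the assumption that $\pi_{G-D}(u,v)$ avoids $H$ is vacuously false (there is no such path to talk about), so the implication still holds.
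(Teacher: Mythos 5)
Your proof is correct and is essentially the paper's own argument: a path avoiding $\caD[u,v]$ is a valid path in $G-\caD[u,v]$, giving $\ellmax(u,v)\le|\pi_{G-D}(u,v)|$, while the maximality of $\caD[u,v]$ gives the reverse inequality. The only cosmetic difference is that you phrase it as a contrapositive on avoiding the vertices of $H$ rather than the edges of $\caD[u,v]$, which is immediately equivalent since each edge of $\caD[u,v]$ has an endpoint in $H$.
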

\begin{proof}
	Actually, we can show that either $|\pi_{G-D}(u, v)| = \ellmax(u, v)$ or $\pi_{G-D}(u, v)$ goes through some edge in $\caD[u, v]$. Suppose that $\pi_{G - D}(u, v)$ does not intersect $\caD[u, v]$. That is, $\pi_{G - D}(u, v)$ is a valid path from $u$ to $v$ that does not go through $\caD[u, v]$. It follows that 
	\[|\pi_{G - D}(u, v)| \ge |\pi_{G - \caD[u, v]}(u, v)| = \ellmax(u, v).\]
	However, we also have $|\pi_{G - \caD[u, v]}(u, v)| \ge |\pi_{G - D}(u, v)|$ by the definition of $\caD[u, v]$. Therefore 
	\[|\pi_{G - D}(u, v)| = \ellmax(u, v).\qedhere\]
\end{proof}

The warm-up case demonstrates that it is easy to satisfy Items (\ref{item: HitSet hits the path}) and (\ref{item: HitSet is small}). All technical complications introduced in the rest of this section deals with Item (\ref{item: HitSet doesn't hit the first or last segment}).

\subsection{The Data Structure} \label{sec: data structure for HitSet}

For every four vertices $u, v, u', v' \in V$ and two Boolean variables $b_1, b_2 \in \{0, 1\}$, our data structure contains a size-$d$ edge set $\caD[u, v, u', v', b_1, b_2]$. This set corresponds to the scenario where we want to find $\pi_{G - D}(u, v)$ (where $D$ is a set of failures given in the query), and we know two intermediate vertices $u', v'$ satisfying the following properties:\begin{enumerate}[(i)]
	\item The paths $\pi(u, u')$ and $\pi(v', v)$ are intact from $D$.\label{item: pi(u, u') and pi(v', v) are intact from D}
	\item We \emph{assume} that $\pi_{G - D}(u, v)$ goes through both $u'$ and $v'$; in other words, $\pi(u, u')$ is a prefix of $\pi_{G - D}(u, v)$, and $\pi(v', v)$ is a suffix of $\pi_{G - D}(u, v)$.
\end{enumerate}

(An intuitive interpretation of $u'$ and $v'$ is as follows. We are trying to find a hitting vertex $w$ such that both $\pi(u, w)$ and $\pi(w, v)$ intersect $D$, so we can add $w$ into our hitting set $H$; $u'$ and $v'$ represent our failed attempts, i.e., hitting vertices $w$ where $\pi(u, w)$ or $\pi(w, v)$ did not happen to intersect $D$.)

The meaning of Boolean variables $b_1$ and $b_2$ are as follows. If $b_1 = 1$, then we require that $T_u(u')\cap V(D) = \varnothing$, where $V(D)$ is the set of vertices incident to some failure in $D$. (Recall that $T_u$ is the shortest path tree rooted at $u$, and $T_u(u')$ is the subtree of $T_u$ rooted at $u'$.) If $b_1 = 0$, we do not impose any condition on $T_u(u') \cap V(D)$. Similarly, if $b_2 = 1$ then $T_v(v')\cap V(D) = \varnothing$, while if $b_2 = 0$ then we do not impose any condition on $T_v(v')\cap V(D)$.

Naturally, we define $\caD[u, v, u', v', b_1, b_2]$ as the set $D'$ that maximises $|\pi_{G-D'}(u, v)|$, subject to Item (\ref{item: pi(u, u') and pi(v', v) are intact from D}) and the conditions imposed by $b_1$ and $b_2$. For example, $\caD[u, v, u', v', 0, 1]$ is the maximiser of $|\pi_{G-D'}(u, v)|$ among all size-$d$ edge sets $D'$ where (See also \cref{fig: illustration of D[u v u' v' 0 1]})%
\[\pi(u, u') \cap D' = \varnothing, \pi(v', v) \cap D' = \varnothing, \text{ and } T_v(v')\cap V(D') = \varnothing.\]

\begin{figure}[H]
	\centering
	\includegraphics{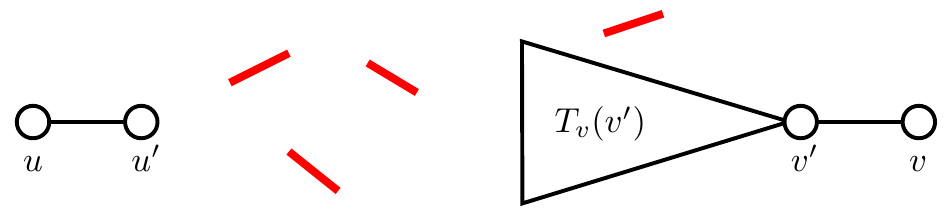}
	\caption{An illustration of $\caD[u, v, u', v', 0, 1]$. The red bold edges are a possible set of edge failures $D'$ that does not intersect $\pi(u, u')$, $\pi(v', v)$, and $T_v(v')$.}
	\label{fig: illustration of D[u v u' v' 0 1]}
\end{figure}

Our data structure occupies $O(dn^4)$ space. It is easy to see that our data structure can be preprocessed in time $m^{d+O(1)}$. We leave it as an open problem to improve this preprocessing time.

\subsection{The $\HitSet$ Algorithm}
In what follows, we say that a vertex $w \in V$ is \emph{$u$-clean} if $\pi(u, w) \cap D = \varnothing$ and $T_u(w) \cap V(D) = \varnothing$. Similarly, we say that a vertex $w \in V$ is \emph{$v$-clean} if $\pi(w, v) \cap D = \varnothing$ and $T_v(w) \cap V(D) = \varnothing$.

We present $\HitSet$ from special cases to the most general case.\begin{itemize}
	\item In Case I, we assume that we know two ``helper'' vertices $u', v'$ such that $u'$ is $u$-clean, $v'$ is $v$-clean, and $\pi_{G-D}(u, v)$ goes through both $u'$ and $v'$. This case is similar to the warm-up case and is essentially one invocation of the data structure $\caD[\cdot]$.
	\item In Case II, we assume that we know \emph{one} ``helper'' vertex $v'$ (such that $v'$ is $v$-clean and lies on $\pi_{G-D}(u, v)$), but we need to find the other ``helper'' vertex ($u'$). We will find a small number of candidate vertices $u'$, thus reducing this case to Case I. \emph{Most of our new techniques will appear in this case.}
	\item Case III is the most general case where we do not have any ``helper'' vertices and need to find them on our own. Nevertheless, using techniques similar to Case II, we can still find a small number of such ``helper'' vertices and reduce this case to Case II.
\end{itemize}

\subsubsection{Case I}
In this case, we assume that we already know vertices $u', v'\in V$ such that $u'$ is $u$-clean, $v'$ is $v$-clean, and $\pi_{G-D}(u, v)$ goes through both $u'$ and $v'$. (See \cref{fig: illustration of case I}.)

\begin{figure}[H]
	\centering
	\includegraphics{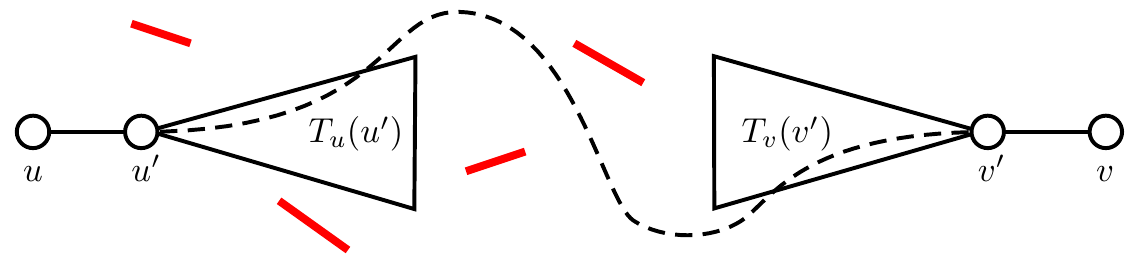}
	\caption{An illustration of Case I. Here, the red bold edges denote $D$. As $\pi(u, u')\cap D = \varnothing$ and $T_u(u')\cap V(D) = \varnothing$, $u'$ is $u$-clean. Similarly, $v'$ is $v$-clean. The dashed path denotes $\pi_{G-D}(u, v)$, which goes through both $u'$ and $v'$.}
	\label{fig: illustration of case I}
\end{figure}

This case can be solved similarly as in the warm-up case. Let $\Dstar = \caD[u, v, u', v', 1, 1]$, then $\Dstar$ is the maximiser of $|\pi_{G-D'}(u, v)|$ over all size-$d$ edge sets $D'$ such that
\begin{equation}
	\pi(u, u'), \pi(v', v), T_u(u'), \text{and }T_v(v')\text{ are intact from }D'.\tag{$\alpha$}\label{eq: requirement}
\end{equation}

Let $L = |\pi_{G-\Dstar}(u, v)|$ and
\[H = \{w \in V(\Dstar): \text{both $\pi(u, w)$ and $\pi(w, v)$ contain failures in $D$}\}.\]

It is easy to see that Items (\ref{item: HitSet is small}) and (\ref{item: HitSet doesn't hit the first or last segment}) hold, so it suffices to show Item (\ref{item: HitSet hits the path}), i.e.:
\begin{claim}\label{claim: case I}
	Either $|\pi_{G-D}(u, v)| = L$ or $\pi_{G-D}(u, v)$ goes through some vertex in $H$.
\end{claim}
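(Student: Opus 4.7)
The plan is to run the warm-up argument almost verbatim, with one extra ingredient that verifies each endpoint of a chosen failure edge actually lies in $H$ rather than merely in $V(\Dstar)$. The crux is that $D$ itself satisfies~(\ref{eq: requirement}): since $u'$ is $u$-clean and $v'$ is $v$-clean, all of $\pi(u,u')$, $\pi(v',v)$, $T_u(u')$, and $T_v(v')$ are intact from $D$, so the maximality of $\Dstar$ yields $L = |\pi_{G-\Dstar}(u,v)| \ge |\pi_{G-D}(u,v)|$. If moreover $\pi_{G-D}(u,v) \cap \Dstar = \varnothing$, then $\pi_{G-D}(u,v)$ is a valid $u$-$v$ path in $G - \Dstar$, so $|\pi_{G-D}(u,v)| \ge L$ and the first alternative of the claim is established.

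Otherwise I would fix an edge $e \in \Dstar$ lying on $\pi_{G-D}(u,v)$ and let $w$ be one of its endpoints; since $w \in V(\Dstar)$ already, it suffices to prove $\pi(u,w) \cap D \neq \varnothing$ and (symmetrically) $\pi(w,v) \cap D \neq \varnothing$. I will argue the first; the second is identical with the roles of $(u,u')$ and $(v,v')$ swapped, invoking the constraint $T_v(v') \cap V(\Dstar) = \varnothing$ instead. Suppose for contradiction that $\pi(u,w) \cap D = \varnothing$. Because $\pi(u,w)$ is shortest in $G$ and untouched by $D$, it is also shortest in $G-D$, so uniqueness of shortest paths gives $\pi_{G-D}(u,w) = \pi(u,w)$; applying the same observation to $\pi(u,u')$ shows that $\pi(u,u')$ is exactly the prefix of $\pi_{G-D}(u,v)$ ending at $u'$. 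Since $e \in \Dstar$ and $\pi(u,u') \cap \Dstar = \varnothing$, the edge $e$ cannot lie on that prefix, so $w$ occurs at $u'$ or later on $\pi_{G-D}(u,v)$. Therefore $u'$ lies on the $u$-to-$w$ prefix $\pi(u,w)$ of $\pi_{G-D}(u,v)$, which forces $w \in T_u(u')$. Combined with $w \in V(\Dstar)$, this contradicts $T_u(u') \cap V(\Dstar) = \varnothing$.

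The main obstacle is exactly this last step: preventing an endpoint $w$ of a failure edge from ``escaping'' $H$ by having its shortest $u$-path fortuitously avoid $D$. The definition of $\caD[u,v,u',v',1,1]$ is engineered to handle precisely this by baking in the subtree-intactness constraint $T_u(u') \cap V(\Dstar) = \varnothing$: once uniqueness of shortest paths pins the supposedly untouched path $\pi(u,w)$ to the prefix of $\pi_{G-D}(u,v)$, and therefore forces it through $u'$, the containment $w \in T_u(u')$ is automatic and the constraint yields the contradiction. Everything else is routine bookkeeping from the definitions of $L$, $H$, and $\Dstar$.
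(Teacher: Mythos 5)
Your proof is correct and follows essentially the same route as the paper's: use the maximality of $\Dstar$ under condition~(\ref{eq: requirement}) (which $D$ satisfies because $u'$ is $u$-clean and $v'$ is $v$-clean) to get $|\pi_{G-D}(u,v)| = L$ when the path avoids $\Dstar$, and otherwise use uniqueness of shortest paths together with the constraints $\pi(u,u')\cap\Dstar=\varnothing$ and $T_u(u')\cap V(\Dstar)=\varnothing$ (and their symmetric counterparts) to force any endpoint of a hit edge into $H$. The only difference is cosmetic: you locate $w$ at or after $u'$ on $\pi_{G-D}(u,v)$ and conclude $w\in T_u(u')$, whereas the paper case-splits on whether $w$ is an ancestor or descendant of $u'$ in $T_u$; both arguments use exactly the same ingredients.
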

\begin{proof}
	We first show that if $\pi_{G-D}(u, v)$ goes through some edge in $\Dstar$, then it also goes through some vertex in $H$. Suppose that $\pi_{G-D}(u, v)$ goes through some edge $e = (x, y) \in \Dstar$, we claim that $\pi(u, x)$ is not intact from $D$. Since $\pi_{G-D}(u, v)$ goes through $u'$ and $\pi(u, u')$ is intact from $D$, $\pi(u, u')$ coincides with the path from $u$ to $u'$ in $T_u$. Suppose $\pi(u, x)$ is also intact from $D$, then $x$ has to be either an ancestor or a descendant of $u'$ in $T_u$. Since $T_u(u') \cap V(\Dstar) = \varnothing$, $x$ cannot be a descendant of $u'$. Therefore, $x$ is a strict ancestor of $u'$. However, as $e$ is an incident edge of $x$ in the path $\pi_{G-D}(u, v)$, it has to be on the path $\pi(u, u')$, which contradicts the fact that $\pi(u, u')\cap \Dstar = \varnothing$. (See \cref{fig: illustration of case I claim}.)
	
	\begin{figure}[H]
		\centering
		\includegraphics{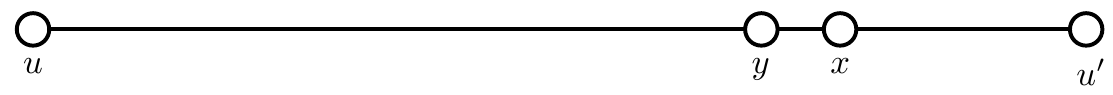}
		\caption{Illustration of \cref{claim: case I}.}
		\label{fig: illustration of case I claim}
	\end{figure}
	
	Therefore, $\pi(u, x)$ cannot be intact from $D$. Similarly, $\pi(x, v)$ cannot be intact from $D$ either. It follows that $x \in H$.
	
	Now the argument is essentially the same as \cref{claim: warm-up}. Suppose that $\pi_{G-D}(u, v)$ does not go through any edge in $\Dstar$, then
	\[|\pi_{G-D}(u, v)| \ge |\pi_{G-\Dstar}(u, v)| = L.\]
	However, $\Dstar$ is the maximiser of $|\pi_{G-D'}(u, v)|$ over all size-$d$ edge sets $D'$ satisfying the condition (\ref{eq: requirement}). Since $u'$ is $u$-clean and $v'$ is $v$-clean, $D$ also satisfies (\ref{eq: requirement}), thus
	\[|\pi_{G-D}(u, v)| = |\pi_{G-\Dstar}(u, v)| = L.\qedhere\]
\end{proof}

\subsubsection{Case II}\label{sec: case II}
In this case, we assume that we know a vertex $v' \in V$ such that $v'$ is $v$-clean, and $\pi_{G-D}(u, v)$ goes through $v'$. The goal of this case is to find a small number of candidates $u'$, such that every $u'$ is $u$-clean and $\pi_{G-D}(u, v)$ goes through one of these $u'$. In this way, we can reduce this case to Case I.

\def\LCA{\mathsf{LCA}}
\def\Tinduced{T_{\sf induced}}
\def\uroot{u_{\mathsf{root}}}
\def\Key{\mathsf{Key}}
\def\Tkey{T_{\sf key}}
First, we add a new vertex $\uroot$, add an edge $(\uroot, u)$ to connect it to $T_u$, and make $\uroot$ the root of $T_u$. This step is solely for convenience.

Denote $\Tinduced$ the induced subtree of $V(D)\cup \{\uroot\}$ over $T_u$, i.e.~an edge is in $\Tinduced$ if it is on some path between two vertices in $V(D)\cup \{\uroot\}$. We say a vertex $v$ is a \emph{key} vertex if either $v \in V(D)$ or the degree of $v$ in $\Tinduced$ is at least $3$. Let $\Key$ be the set of key vertices, then $|\Key| \le O(d)$. By contracting every non-key vertex in $\Tinduced$ (note that these vertices have degree exactly $2$), we obtain a smaller tree $\Tkey$ over $\Key$ where each edge $(x, y)$ in $\Tkey$ corresponds to a path from $x$ to $y$ in $\Tinduced$; here $x$ and $y$ are key vertices and all the intermediate vertices on the path have degree $2$.

\begin{figure}[H]
	\centering
	\includegraphics[width=0.9\linewidth]{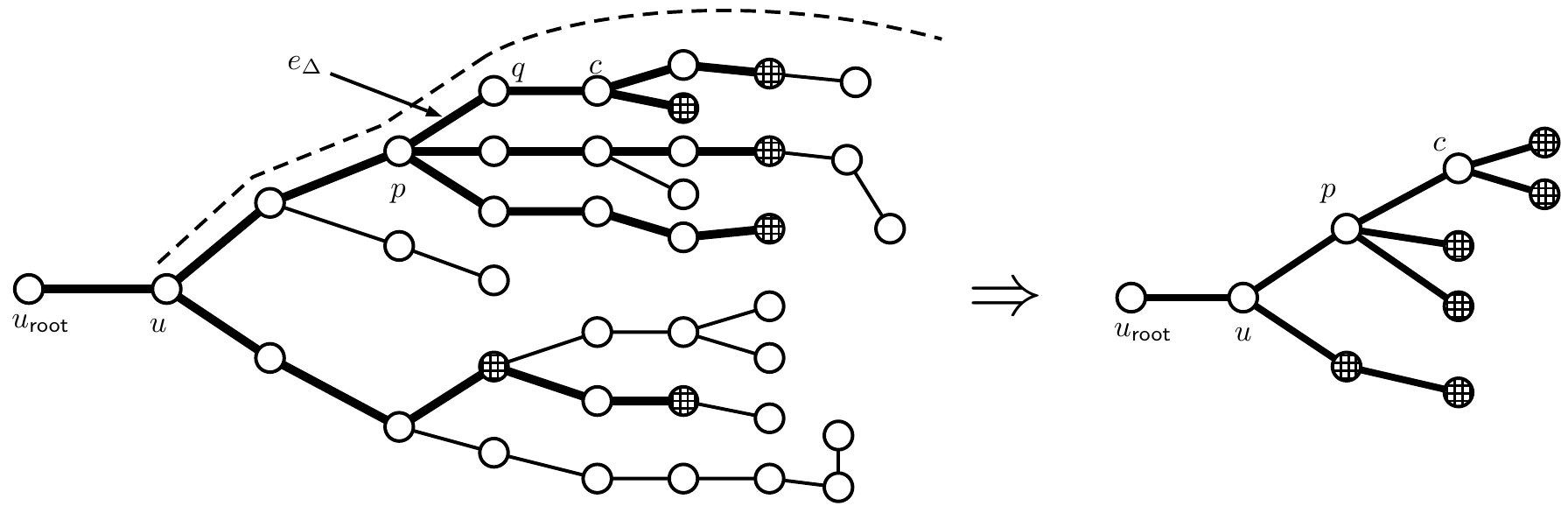}
	\caption{Left: a possible shortest path tree $T_u$. Hatched vertices are vertices in $V(D)$, and bold edges are edges in $\Tinduced$. The dash curve corresponds to $\pi_{G-D}(u, v)$, and $e_\Delta$ is the edge between $p$ and $q$. Right: the corresponding $\Tkey$. Note that $(p, c)$ is the edge in $\Tkey$ such that $e_\Delta$ is on the path from $p$ to $c$ in $\Tinduced$.}
\end{figure}

Let $e_\Delta$ be the last edge on $\pi_{G-D}(u, v)$ such that the portion from $u$ to $e_\Delta$ in $\pi_{G-D}(u, v)$ is entirely in $\Tinduced$. Note that $e_\Delta$ always exists since we added the auxiliary $\uroot$. (In particular, if $\pi_{G-D}(u, v)$ does not intersect $\Tinduced$ at all, we assume $e_\Delta$ is the edge between $\uroot$ and $u$.)

We enumerate an edge $(p, c) \in E(\Tkey)$ with the hope that $e_\Delta$ is on the path from $p$ to $c$ in $\Tinduced$. Note that there are $O(d)$ possible choices of $(p, c)$. Let $\Dstar = \caD[u, v, c, v', 0, 1]$, then $\Dstar$ maximises $|\pi_{G-\Dstar}(u, v)|$ over all size-$d$ sets of edge failures such that 
\begin{equation}
\Dstar \cap \pi(u, c) = \varnothing, \Dstar \cap \pi(v', v) = \varnothing, \text{ and }V(\Dstar) \cap T_v(v') = \varnothing.
\label{eq: condition for Dstar in Case II}\tag{$\beta$}
\end{equation}

If $D\cap\pi(u, c) \ne \varnothing$, then we discard the edge $(p, c)$. This is because the following claim shows that $e_\Delta$ cannot appear in the path from $p$ to $c$:

\begin{claim}
	If $D\cap\pi(u, c) \ne \varnothing$, then $e_\Delta$ cannot appear on the path from $p$ to $c$.\label{claim: D is valid}
\end{claim}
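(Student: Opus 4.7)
The plan is to establish the contrapositive: assuming $e_\Delta$ lies on the path from $p$ to $c$ in $\Tinduced$, I would deduce that $\pi(u, c) \cap D = \varnothing$. The whole argument reduces to tracking where failures can sit relative to the tree decomposition imposed by $\Tkey$.

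First, I would orient $e_\Delta = (x, y)$ so that $x$ is the ancestor of $y$ in $T_u$. Because the prefix of $\pi_{G-D}(u, v)$ up to and including $e_\Delta$ is contained in $\Tinduced \subseteq T_u$, starts at the root $u$, and is a simple path, it must coincide with the unique downward $T_u$-path $\pi(u, y)$. Now $y$ lies on the path from $p$ to $c$ in $\Tinduced$, so $p$ is an ancestor of $y$ in $T_u$; hence $\pi(u, p)$ is a prefix of $\pi(u, y)$. Since $\pi(u, y)$ is part of $\pi_{G-D}(u, v)$ and therefore avoids $D$, this already yields $\pi(u, p) \cap D = \varnothing$.

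Second, I would show that no edge of $D$ lies on the path from $p$ to $c$ in $\Tinduced$ (which, since this path is a sub-path of $T_u$, is exactly the portion of $\pi(u, c)$ below $p$). Suppose toward a contradiction that some $e^* = (a, b) \in D$ lies on that path. Then both endpoints $a, b$ belong to $V(D)$, so both are key vertices by definition. But the intermediate vertices along the $p$-to-$c$ subpath of $\Tinduced$ are non-key (otherwise the $\Tkey$-edge $(p, c)$ would not exist). Hence $\{a, b\} \subseteq \{p, c\}$, and combined with $a$ being the parent of $b$ in $T_u$, this forces $e^* = (p, c)$ and the entire subpath to be the single edge $(p, c)$. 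But then $e_\Delta$, which is on this subpath, coincides with $e^* \in D$, contradicting the fact that $e_\Delta$ lies on $\pi_{G-D}(u, v)$ and is therefore not in $D$.

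Combining both steps, $\pi(u, c)$ decomposes as $\pi(u, p)$ followed by the $T_u$-path from $p$ to $c$; both pieces avoid $D$, so $\pi(u, c) \cap D = \varnothing$, which is the contrapositive of the claim. The only subtle point is the second step: one must exploit precisely the way $\Tkey$ is constructed — namely, that every vertex of $V(D)$ was declared key, so any failed tree edge on the $p$-to-$c$ subpath would create a forbidden key intermediate vertex. I expect this ``both endpoints of a failed tree edge are key'' observation to be the only real content of the argument; everything else is a direct unpacking of the definitions of $\Tinduced$, $\Tkey$, and $e_\Delta$.
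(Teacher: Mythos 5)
Your proof is correct and is essentially the paper's own argument, just packaged as a contrapositive: both rest on the two facts that the $u$-to-$e_\Delta$ prefix of $\pi_{G-D}(u,v)$ is a failure-avoiding downward path in $T_u$ along $\pi(u,c)$, and that every vertex of $V(D)$ is a key vertex and hence cannot lie strictly between $p$ and $c$. The paper states this directly by taking the first failure-incident vertex $v_{\mathsf{first}}$ on $\pi(u,c)$, observing it lies at or above $p$ while $e_\Delta$ must precede it, but the underlying content is the same.
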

\begin{figure}[H]
	\centering
	\includegraphics[width=0.6\linewidth]{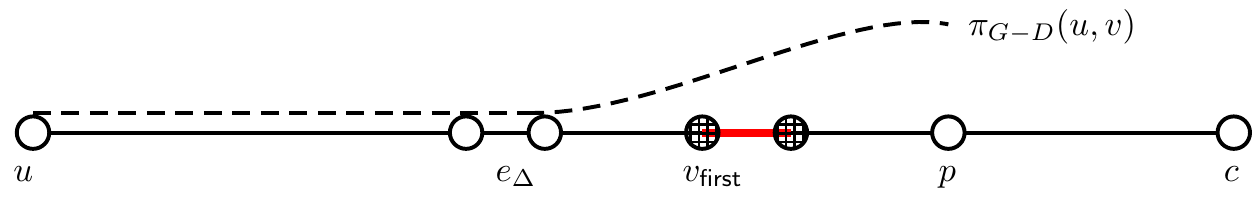}
\end{figure}
\begin{proof}
	Let $v_{\sf first}$ be the first vertex on $\pi(u, c)$ which is incident to a failed edge in $D$ on $\pi(u, c)$. Since $\pi_{G-D}(u, v)$ avoids the failed edge on $\pi(u, c)$, $e_\Delta$ has to be before $v_{\sf first}$. On the other hand, since $v_{\sf first} \in V(D)\subseteq\Key$, $v_{\sf first}$ does not lie after $p$. This means that $e_\Delta$ is strictly before $p$ on the path $\pi(u, c)$.
\end{proof}

If $D\cap \pi(u, c) = \varnothing$ then $D$ also satisfies (\ref{eq: condition for Dstar in Case II}). It is now valid to update
\[L \gets \min\{L, |\pi_{G - \Dstar}(u, v)|\}.\]
By the same reasoning as \cref{claim: warm-up}, if $|\pi_{G-D}(u, v)| < |\pi_{G-\Dstar}(u, v)|$, then $\pi_{G-D}(u, v)$ should go through some edge in $\Dstar$.

Now we construct a set $\Helper$ of candidate ``helper'' vertices $u'$ by examining every edge $e \in \Dstar \setminus D$ one by one. Suppose $e$ is an edge between $x$ and $y$:\begin{enumerate}[({Case } i)]

	\item If $\pi(x, v) \cap D = \varnothing$ or $\pi(y, v) \cap D = \varnothing$, we discard $e$.
	
	\includegraphics[width=0.8\linewidth]{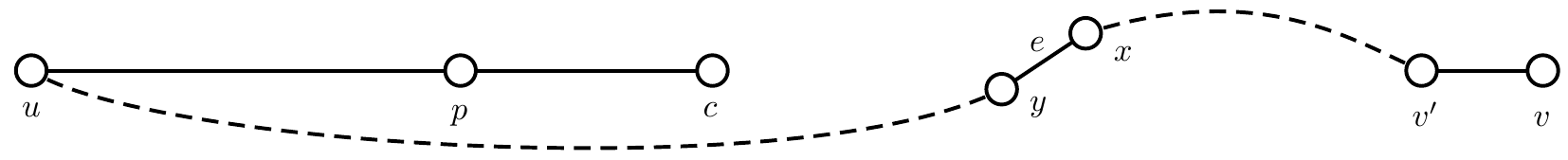}

	The reason is that $\pi_{G-D}(u, v)$ cannot go through $e$. Otherwise, suppose w.l.o.g.~that $\pi(x, v) \cap D = \varnothing$. Since $\pi_{G-D}(u, v)$ goes through $e$, in particular it also goes through $x$. It follows that $\pi(x, v)$ is a suffix of $\pi_{G-D}(u, v)$. Note that $\pi(v', v)$ is also a suffix of $\pi_{G-D}(u, v)$. Therefore $x$ is either an ancestor or a descendant of $v'$ in $T_v$.\begin{itemize}
		\item If $x$ is a descendant of $v'$ in $T_v$, then $x$ is both in $V(\Dstar)$ and $T_v(v')$, violating (\ref{eq: condition for Dstar in Case II}).
		\item If $x$ is an ancestor of $v'$ in $T_v$, then $e$ has to be on the path $\pi(v', v)$ in order for $\pi_{G-D}(u, v)$ to go through $e$, but this also violates (\ref{eq: condition for Dstar in Case II}).
	\end{itemize}

	\item Otherwise, if $\pi(u, x)\cap D \ne \varnothing$, we add $x$ into $H$. \label{item: case 2.2}
	
	Note that in this case, both $\pi(u, x)$ and $\pi(x, v)$ intersect $D$, therefore it is safe to add $x$ into $H$.
	
	\item Otherwise, if $\pi(u, y)\cap D \ne \varnothing$, we add $y$ into $H$.
	
	This case is similar as (Case \ref{item: case 2.2}).

	\item Otherwise, if $e$ is not a tree edge in $T_u$, we discard $e$.\label{item: case 2.4}

	\includegraphics[scale=0.8]{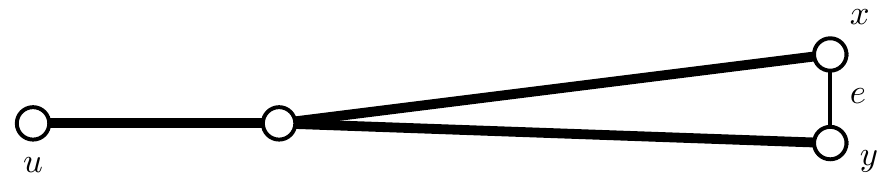}
	
	This is because in this case, both $\pi(u, x)$ and $\pi(u, y)$ are intact from $D$, and $\pi_{G-D}(u, v)$ does not need to go through $e$ at all. (For example, if $\pi_{G-D}(u, v)$ goes through $e$ and $x$ appears just before $y$, then $\pi_{G-D}(u, v)$ should use the path $\pi(u, y)$ instead of the concatenation of $\pi(u, x)$ and $e$.)
	
	\item Otherwise, $e$ is a tree edge in $T_u$. W.l.o.g.~we assume that $x$ is the parent of $y$ in $T_u$. If $T_u(y)\cap V(D) = \varnothing$, we add $y$ into $\Helper$.
	
	Note that $y$ is $u$-clean in this case, so it is safe to add it into $\Helper$.
	
	\item Otherwise, we discard $e$.\label{item: case 2.6}
	
	\includegraphics[width=0.7\linewidth]{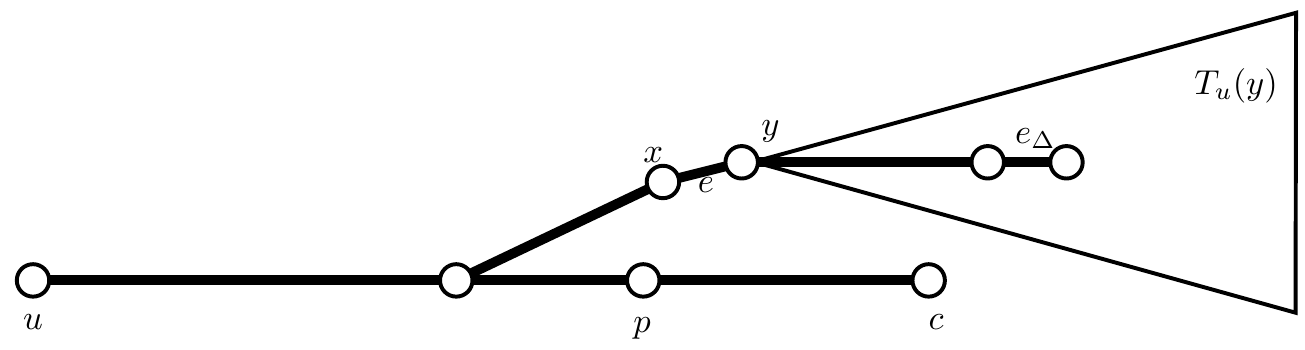}

	We need to prove that it is valid to discard $e$ in this case, i.e.~$\pi_{G-D}(u, v)$ cannot go through $e$. Note that since $T_u(y)\cap V(D) \ne \varnothing$, $y$ lies on $\Tinduced$ (i.e.~has non-zero degree in $\Tinduced$). Since $\pi(u, y) \cap D = \varnothing$, if $\pi_{G-D}(u, v)$ goes through $y$, then $\pi(u, y)$ must be a prefix of $\pi_{G-D}(u, v)$. It follows that $e_\Delta$ is either equal to $e$ or in the subtree $T_u(y)$. Since (\ref{eq: condition for Dstar in Case II}) implies that $e$ is not on the path $\pi(u, c)$, $e_\Delta$ cannot be on the path $\pi(p, c)$ either, a contradiction.
\end{enumerate}

\paragraph{Summary.} In Case II, we first construct the trees $\Tinduced$ and $\Tkey$ in $\tilde{O}(d)$ time. Then we enumerate an edge $(p, c) \in E(\Tkey)$. Let $D_\star = \caD[u, v, c, v', 0, 1]$, then for each edge $e\in D_\star\setminus D$, according to the above case analysis, we either discard $e$, add a vertex into $H$, or add a vertex into $\Helper$. Note that for every $u' \in \Helper$, $(D, u', v')$ satisfies (\ref{eq: requirement}).

After enumerating all edges in $E(\Tkey)$, we have that $|H| \le O(d^2)$ and $|\Helper| \le O(d^2)$. Then for each $u' \in \Helper$, we invoke the algorithm for Case I where we assume that $\pi_{G-D}(u, v)$ goes through both $u'$ and $v'$. Each invocation returns a hitting set of size $O(d)$ and an upper bound $L$ of $|\pi_{G-D}(u, v)|$. Finally, we let $L$ be the smallest upper bound found during the entire execution of the algorithm, and $H$ be the union of all hitting sets (which has size $O(d^3)$). It is easy to see that Case II takes $O(d^3)$ time.

\subsubsection{Case III}
Case III is the most general case: We only know the query $(u, v, D)$ but no ``helper'' vertices $u'$ or $v'$. The goal is to find a few intermediate vertices $w$ which are either $u$-clean or $v$-clean, such that $\pi_{G-D}(u, v)$ goes through one of the vertices $w$.

We construct the trees $\Tinduced^u$, $\Tkey^u$, $\Tinduced^v$, and $\Tkey^v$. Let $e_\Delta$ be the last edge on $\pi_{G-D}(u, v)$ such that the portion from $u$ to $e_\Delta$ on $\pi_{G-D}(u, v)$ is entirely in $\Tinduced^u$, and $e_\nabla$ be the first edge such that the portion from $e_\nabla$ to $v$ on $\pi_{G-D}(u, v)$ is entirely in $\Tinduced^v$.

We enumerate edges $(p^u, c^u) \in E(\Tkey^u)$ and $(c^v, p^v) \in E(\Tkey^v)$ such that $e_\Delta$ is on the path from $p^u$ to $c^u$ in $\Tinduced^u$, and $e_\nabla$ is on the path from $c^v$ to $p^v$ in $\Tinduced^v$. Note that there are $O(d^2)$ possible choices of $(p^u, c^u)$ and $(p^v, c^v)$.

\begin{figure}[H]
	\centering
	\includegraphics[scale=0.8]{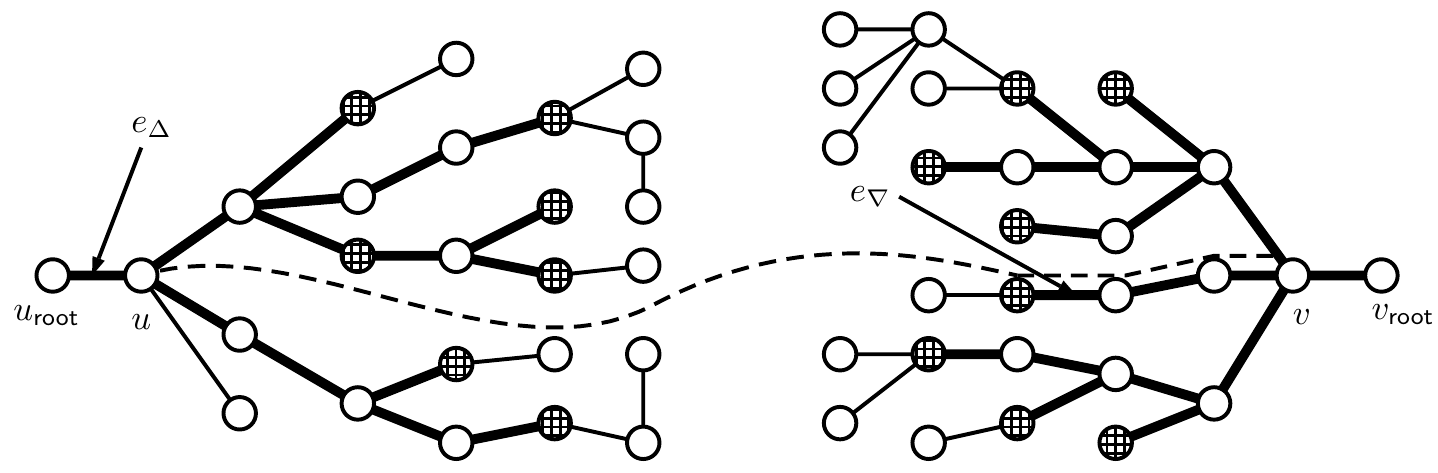}
	\caption{The trees $\Tinduced^u$ (left) and $\Tinduced^v$ (right). Again, the hatched vertices are vertices in $V(D)$, and the dashed curve corresponds to $\pi_{G-D}(u, v)$. Note that in this particular figure, $e_\Delta$ coincides with the edge between $\uroot$ and $u$.}
\end{figure}

\begin{figure}[H]
	\centering
	\includegraphics[scale=0.8]{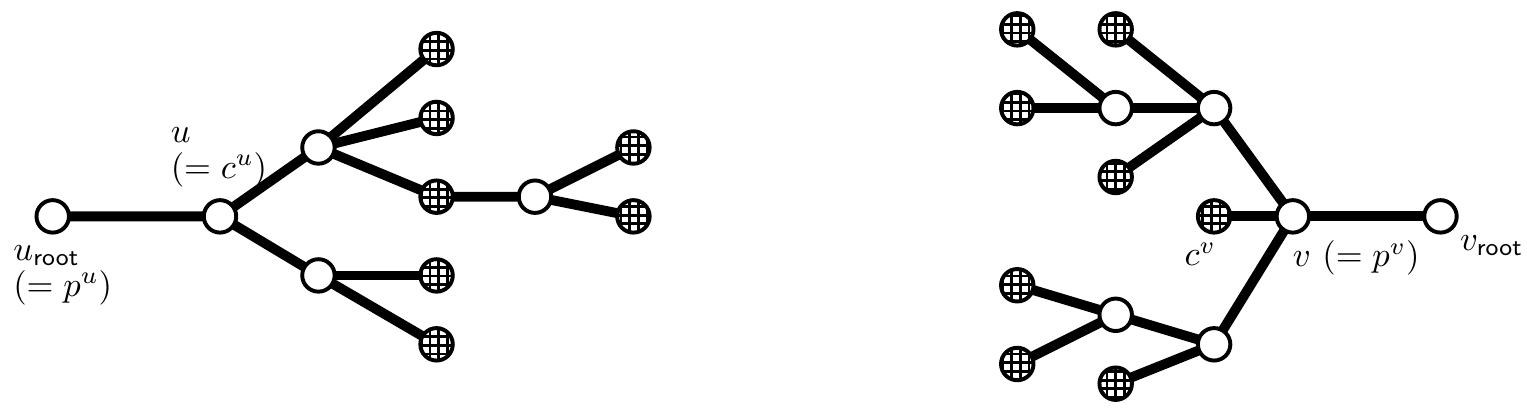}
	\caption{The trees $\Tkey^u$ (left) and $\Tkey^v$ (right).}
\end{figure}

Now let $\Dstar = \caD[u, v, c^u, c^v, 0, 0]$, then $\Dstar$ maximises $|\pi_{G-\Dstar}(u, v)|$ over all size-$d$ set of edge failures such that
\begin{equation}
\Dstar \cap \pi(u, c^u) = \varnothing, \text{ and }\Dstar \cap \pi(c^v, v) = \varnothing.\tag{$\gamma$}
\label{eq: condition for Dstar in Case III}
\end{equation}
By \cref{claim: D is valid}, if $D\cap \pi(u, c^u)\ne\varnothing$, then $e_\Delta$ cannot appear in the path from $p^u$ to $c^u$. Similarly, if $D\cap \pi(c^v, v)\ne\varnothing$, then $e_\nabla$ cannot appear in the path from $c^v$ to $p^v$. Therefore we may assume $D$ satisfies \cref{eq: condition for Dstar in Case III}, as otherwise we can discard $(p^u, c^v)$ and $(p^v, c^v)$. This means that it is safe to update
\[L \gets \min\{L, |\pi_{G - \Dstar}(u, v)|\}.\]

If $|\pi_{G-D}(u, v)| < |\pi_{G - \Dstar}(u, v)|$, then $\pi_{G-D}(u, v)$ should go through some edge in $\Dstar \setminus D$. Now we compute a hitting set $H$ and a set $\Helper$ of ``helper'' vertices by inspecting every edge in $\Dstar\setminus D$. In particular, we enumerate this edge $e = (x, y) \in \Dstar \setminus D$, and assume that $x$ appears before $y$ on the path $\pi_{G-D}(u, v)$. (That is, every edge $(x, y)$ is considered twice, once for $(x, y)$ and once for $(y, x)$.)
\begin{enumerate}[({Case }i)]
	\item Suppose that $\pi(u, x) \cap D = \varnothing$ and $\pi(y, v)\cap D = \varnothing$.
	
	We can immediately update $L \gets \min\{L, |\pi(u, x)| + w(e) + |\pi(y, v)|\}$. (Here $w(e)$ is the weight of the edge $e$.)

	\item Suppose that $\pi(u, x)\cap D \ne \varnothing$ and $\pi(y, v)\cap D \ne \varnothing$.
	
	If $\pi(x, v) \cap D = \varnothing$, then $y$ cannot appear on $\pi_{G-D}(x, v)$, which means the edge $e$ is invalid. Otherwise we can safely add $x$ into $H$.

	\item Suppose that $\pi(u, x)\cap D = \varnothing$ but $\pi(y, v)\cap D\ne \varnothing$. \label{item: case 3.3}
	
	\begin{enumerate}[({Case iii.}a)]
		\item Suppose that $(x, y)$ is not a tree edge in $T_u$. If $\pi(u, y)\cap D\ne\varnothing$, then we can safely add $y$ into $H$. Otherwise, a similar argument as (Case \ref{item: case 2.4}) in \cref{sec: case II} shows that we can discard $e$.
		\item Suppose that $(x, y)$ is a tree edge in $T_u$. Since $x$ appears before $y$ on $\pi_{G-D}(u, v)$, $x$ has to be the parent of $y$ in $T_u$ (otherwise we discard $(x, y)$). If $T_u(y)\cap V(D) = \varnothing$, we add $y$ into $\Helper$; otherwise we discard $y$.
		
		\includegraphics[width=0.8\linewidth]{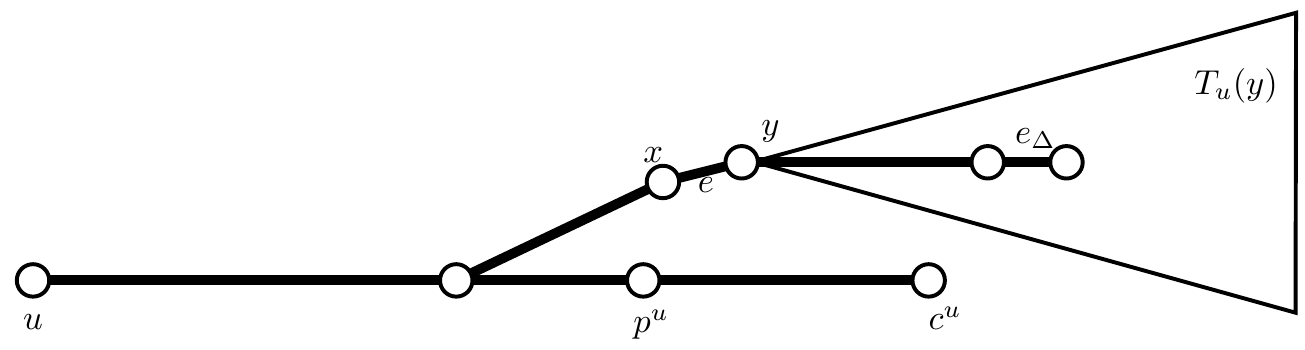}
		
		It is easy to see that if we add $y$ into $\Helper$, then $y$ is $u$-clean. Now we need to show that whenever we discard $y$, $\pi_{G-D}(u, v)$ cannot go through the edge $(x, y)$ (in the order of first $x$ and then $y$). This is essentially the same as (Case \ref{item: case 2.6}) in \cref{sec: case II}. Note that since $T_u(y) \cap V(D) \ne \varnothing$, $y$ lies on the tree $\Tinduced^u$. Since $\pi(u, y)\cap D = \varnothing$, if $\pi_{G-D}(u, v)$ goes through $y$, then $\pi(u, y)$ must be a prefix of $\pi_{G-D}(u, v)$. It follows that $e_\Delta$ is either equal to $e$ or in the subtree $T_u(y)$. By (\ref{eq: condition for Dstar in Case III}), $e$ is not on the path $\pi(u, c^u)$, therefore $e_\Delta$ cannot be on the path $\pi(p^u, c^u)$ either, a contradiction.
	\end{enumerate}
	
	\item Suppose that $\pi(u, x)\cap D \ne \varnothing$ but $\pi(y, v)\cap D = \varnothing$.
	
	This case is symmetric to (Case \ref{item: case 3.3}), so we only provide a sketch. If $(x, y)$ is not a tree edge in $T_v$, then we add $x$ into $H$ if $\pi(x, v) \cap D\ne\varnothing$ and discard $e$ otherwise. If $(x, y)$ is a tree edge in $T_v$, then (assuming $y$ is the parent of $x$ in $T_v$) we add $x$ into $\Helper$ if $T_v(x)\cap V(D) = \varnothing$ and discard $e$ otherwise.
\end{enumerate}

\paragraph{Summary.} In Case III, we first construct $\Tinduced^u$, $\Tinduced^v$, $\Tkey^u$, and $\Tkey^v$ in $\tilde{O}(d)$ time. Then we enumerate an edge $(p^u, c^u) \in \Tkey^u$ and an edge $(p^v, c^v) \in \Tkey^v$. Let $\Dstar = \caD[u, v, c^u, c^v, 0, 0]$, then for each edge $e \in \Dstar\setminus D$ and each of its two possible orientations, according to the above case analysis, we either discard $e$, add a vertex into $H$, or add a vertex into $\Helper$.

After this procedure, we have that $|H| \le O(d^3)$ and $|\Helper| \le O(d^3)$. In this way, we can reduce Case III to $O(d^3)$ instances of Case II. Note that an instance of Case II runs in $O(d^3)$ time, therefore an invocation of Case III runs in $O(d^6)$ time.

	\section{Conclusions and Open Problems}
In this paper, we presented the first exact distance oracle that tolerates $d$ edge failures and has reasonable size and query time bounds. Our oracle has size $O(dn^4)$ and query time $d^{O(d)}$. However, our oracle still has some drawbacks:

\begin{enumerate}
	\item We think the biggest drawback of our oracle is its preprocessing time. Is there a faster preprocessing algorithm for our oracle? In particular, can we preprocess it in $O(n^c)$ time for some constant $c$ independent of $d$?
	\item Can we maintain exact distances under $d$ \emph{vertex} failures? Our oracle relies heavily on \cref{thm:edge-decomposable} which only works for edge failures.\footnote{\cite[Section 4.1]{DuanGR21} proved a version of \cref{thm:edge-decomposable} for \emph{approximate} distances under vertex failures. As the authors observe in their Footnote 15, a version for exact distances under vertex failures is unlikely to exist.}
	\item Can we improve the size of our oracle to (say) $\tilde{O}(dn^2)$? Currently our oracle is trivial when $d = 3$ or $d = 4$ and only non-trivial when $d > 4$. Such an improvement would imply non-trivial solutions for all $d$.
\end{enumerate}

	\section*{Acknowledgements}
	We thank Yong Gu and Tianyi Zhang for helpful discussions during the initial stage of this research. We are grateful to Yaowei Long and Lijie Chen for helpful comments on a draft version of this paper.
	
	\bibliography{article}	
	
	\newpage

\end{document}